\DeclareMathOperator{\CC}{CC}
\newcommand{\size}[1]{\left\lvert#1\right\rvert}
\newtheorem{theorem}{Theorem}
\newtheorem{lemma}{Lemma}
\theoremstyle{thmstyletwo}%
\theoremstyle{thmstylethree}%
\newtheorem{definition}{Definition}%
\begin{document}

\title{Massively-Parallel Heat Map Sorting and\\ Applications To Explainable Clustering}

\author{Sepideh Aghamolaei and Mohammad Ghodsi \thanks{Department of Computer Engineering, Sharif University of Technology.\\sepideh.aghamolaei14@sharif.edu, ghodsi@sharif.edu}}

\maketitle

\abstract{Given a set of points labeled with $k$ labels, we introduce the heat map sorting problem as reordering and merging the points and dimensions while preserving the clusters (labels). A cluster is preserved if it remains connected, i.e., if it is not split into several clusters and no two clusters are merged.

We prove the problem is NP-hard and we give a fixed-parameter algorithm with a constant number of rounds in the massively parallel computation model, where each machine has a sublinear memory and the total memory of the machines is linear.
We give an approximation algorithm for a NP-hard special case of the problem.
We empirically compare our algorithm with k-means and density-based clustering (DBSCAN) using a dimensionality reduction via locality-sensitive hashing on several directed and undirected graphs of email and computer networks.}
\section{Introduction}

Simply put, an explainable model is any model that can be interpreted by a human by just looking at the output. For a review of explainable machine learning, see~\cite{marcinkevivcs2020interpretability}. One way to design explainable models is to minimize the complexity of the visual representations of clustering models/outputs while keeping the cost of clustering the same.

In the case of clustering, the definition of explainable clustering refers to a dimensionality reduction based on a given clustering.
This is based on a method of explainable clustering for $k$-medians and $k$-means called iterative mistake minimization~\cite{dasgupta2020explainable}. It computes a clustering using any algorithm and then it builds a threshold tree to separate the centers based on the number of misclassifications for each single-dimension (feature) split.

However, the classic way of defining the number of outliers as input does not work in the case of explainable clustering since for example, removing all the points of a small cluster would adversely affect the explainability. So, a different approach is needed.
Another way of improving the explainability of clustering is to use a tree of depth $k$ instead of a tree with $k$ leaves, which was empirically studied~\cite{laber2023shallow}. This is a special case of increasing the number of dimensions in the explainable model.

The number of clusters in the explainable clustering $(k)$ is not necessarily equal to the number of clusters in the original $k$-medians or $k$-means, for example, if two dimensions (features) together separate two clusters and none of them alone can separate them.
Setting the right value for $k$ requires executing the algorithm several times (at least $\min(k,\log n)$ times, if linear search and binary search are used). The resulting time complexity is too high to be used on big data.

A more general model than decision trees is needed for non-convex clusters (such as density-based clustering).
So, instead of using a tree, we use a simplified version of heat maps.
Heat maps represent data in the form of matrices with cells colored based on their values that are often used to represent the correlation between two sets of values. The dimensions are usually values of a categorical/nominal attribute. In a clustering heat map, the values of the dimensions are arranged based on a dendrogram (a tree representation of a hierarchical clustering built by merging the closest pairs, starting from individual items).

See \Cref{fig:heatmap} for an example.
\begin{figure}[h]
\centering
\begin{subfigure}[t]{0.45\textwidth}
\centering
\includegraphics[scale=0.5]{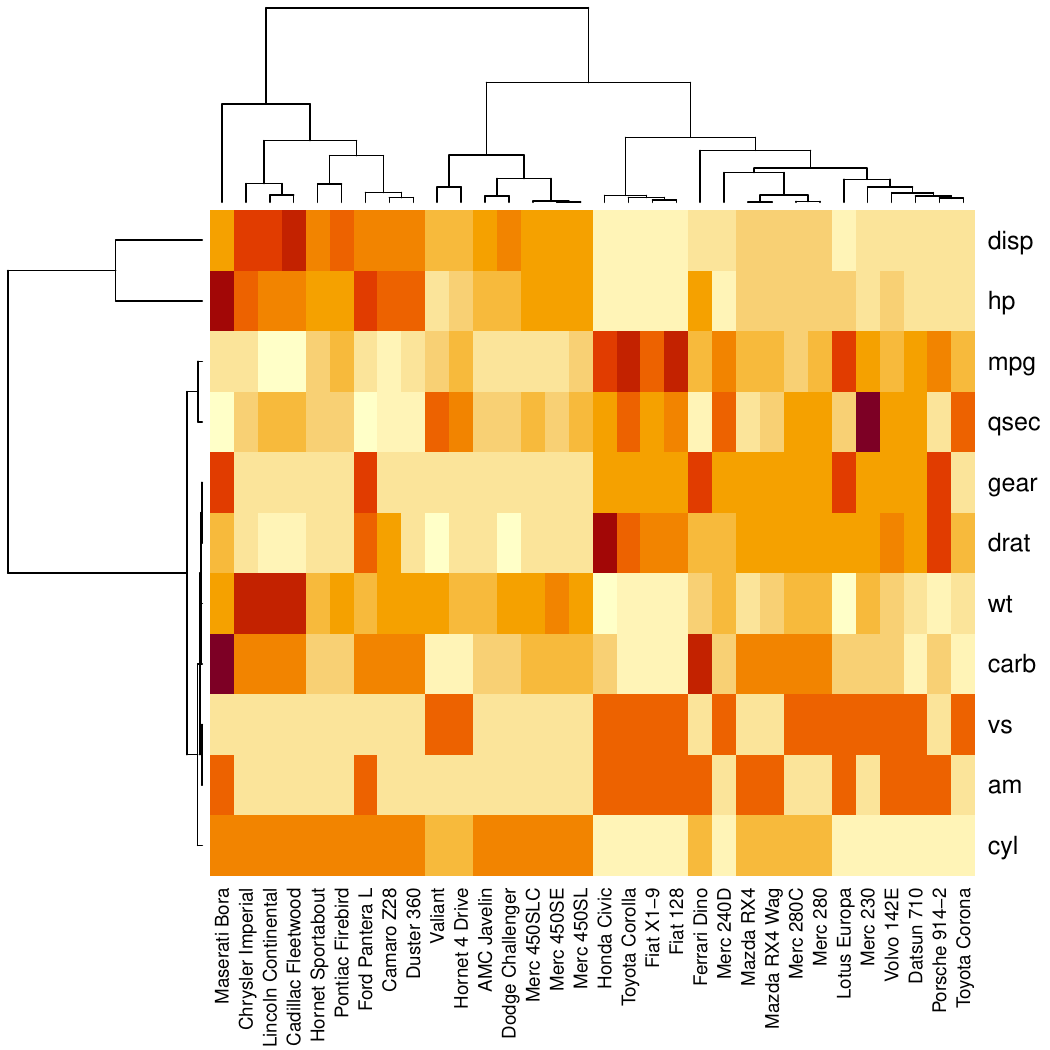}
\caption{Sorting based on dendrograms}\label{subfig:a}
\end{subfigure}
\begin{subfigure}[t]{0.45\textwidth}
\centering
\includegraphics[scale=0.5]{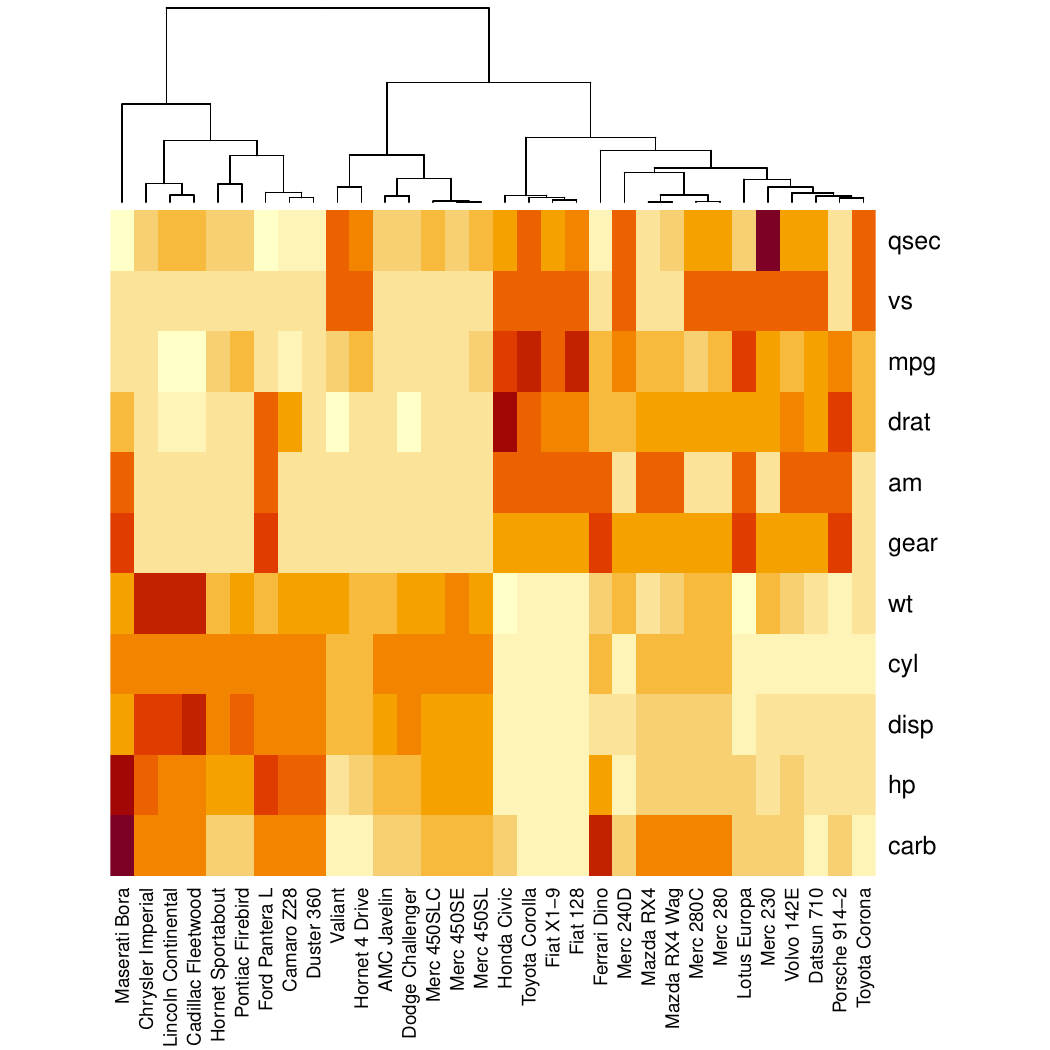}
\caption{Sorting the attributes (rows)}\label{subfig:b}
\end{subfigure}\\
\qquad\qquad
\begin{subfigure}[t]{0.4\textwidth}
\centering
\includegraphics[scale=0.4]{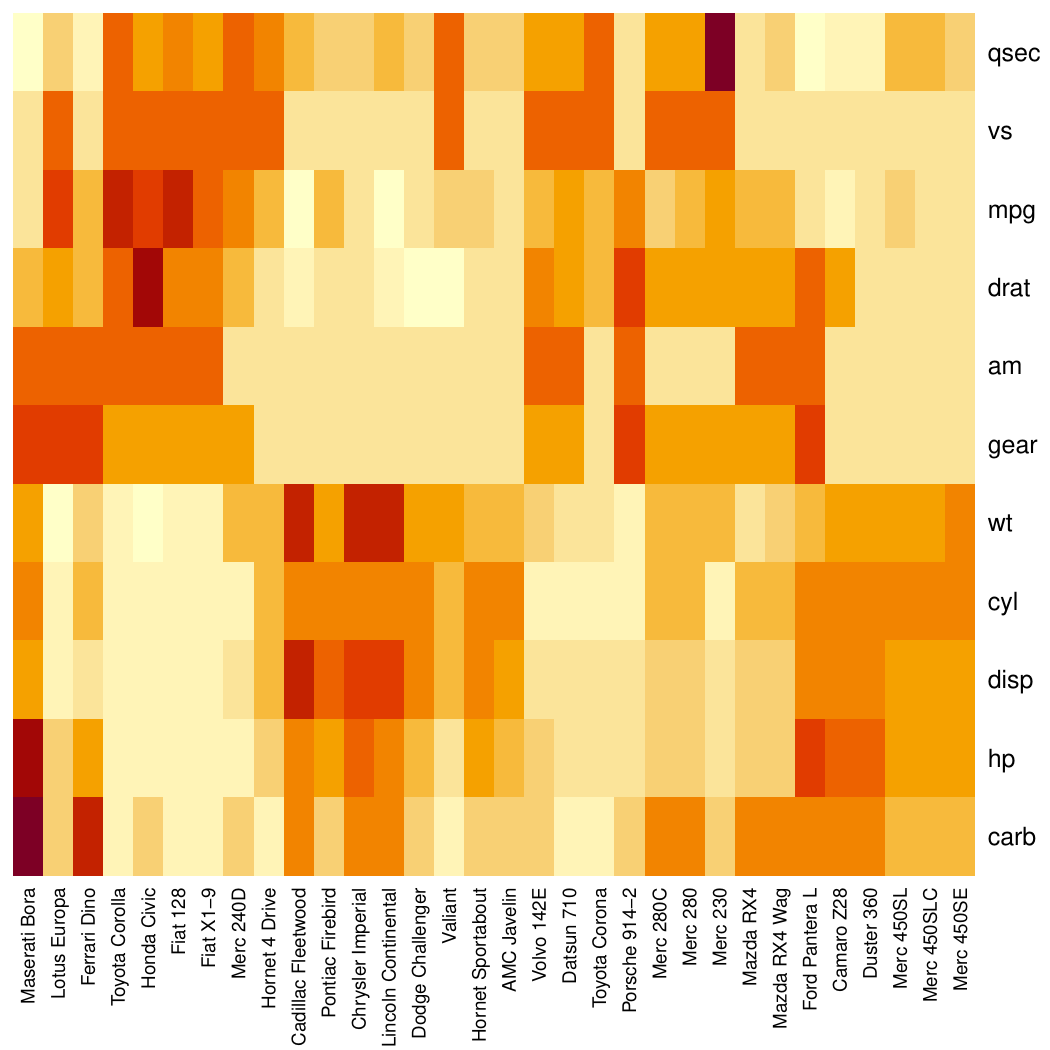}
\caption{Sorting based on correlations}\label{subfig:e}
\end{subfigure}
\caption{A heat map with dimension values sorted for mtcars dataset in R. First, an unsorted heatmap (\Cref{subfig:a}) is sorted by applying the ordering of the correlation matrix of the attributes (\Cref{subfig:c}) on the rows, resulting in the heat map of (\Cref{subfig:b}). However, applying the correlation matrix of points (\Cref{subfig:d}) to sort the columns creates an unsorted heat map on the clusters (\Cref{subfig:e}).}
\label{fig:heatmap}
\end{figure}

\begin{figure}
\centering
\begin{subfigure}[t]{0.9\textwidth}
\centering
\includegraphics[scale=0.7]{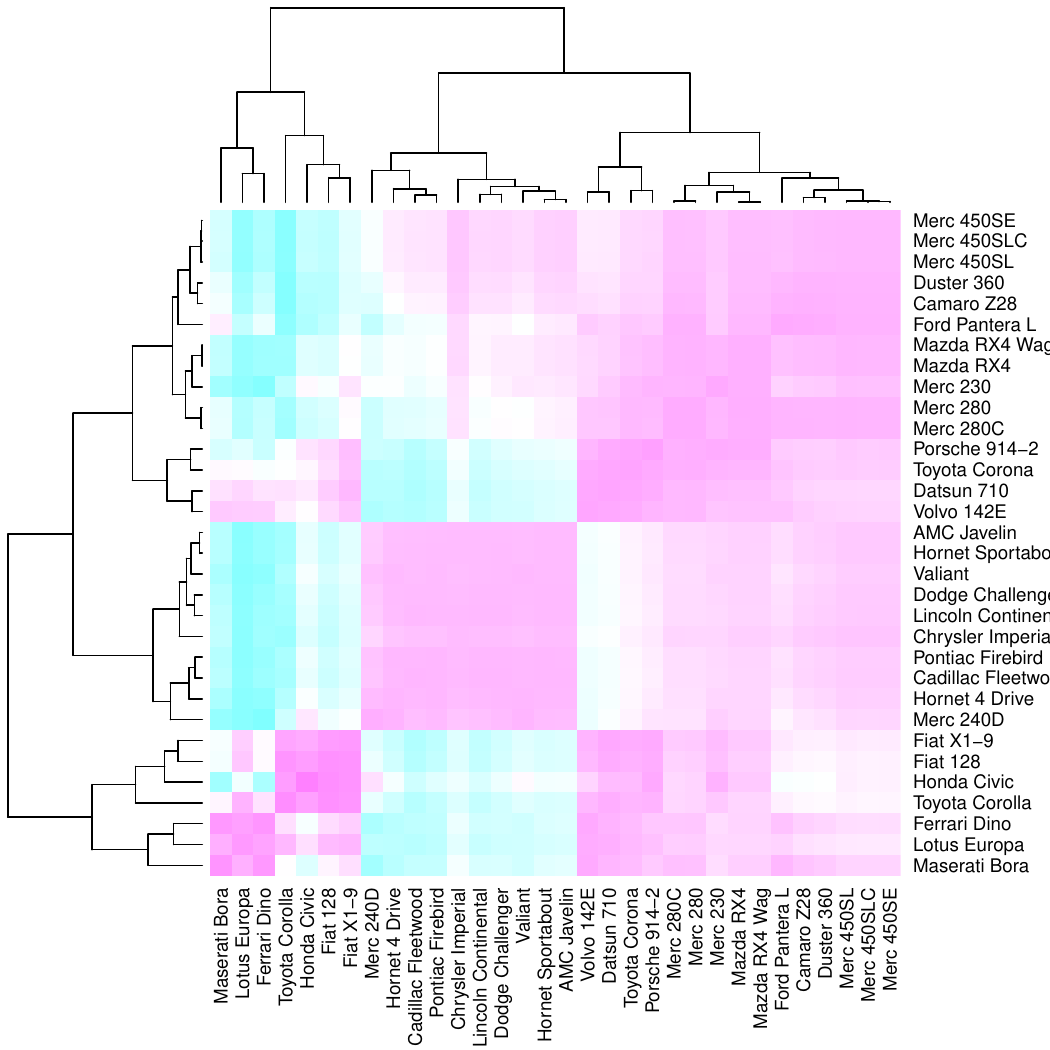}
\caption{Correlation matrix of points (cars).}\label{subfig:d}
\end{subfigure}
\begin{subfigure}[t]{0.9\textwidth}
\centering
\includegraphics[scale=0.3]{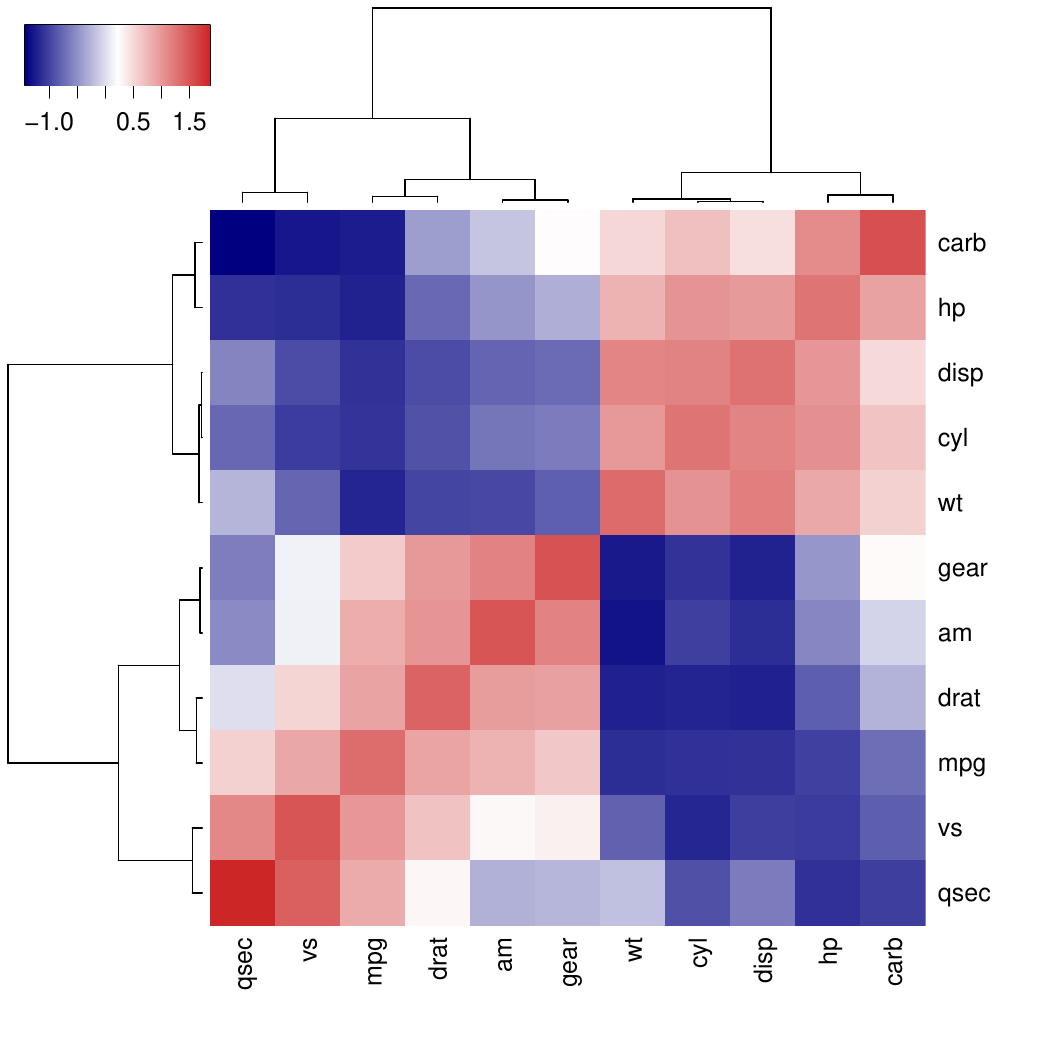}
\caption{Correlation matrix of the dimensions (attributes).}\label{subfig:c}
\end{subfigure}
\caption{The correlation matrices of rows and columns of the mtcars data set.}
\end{figure}

The diagram of \Cref{fig:heatmap} is drawn using R on the dataset mtcars (Motor Trend Car Road Tests) in R with dimension fuel consumption and 10 aspects of automobile design and performance for 32 automobiles (1973--74 models): mpg (Miles/(US) gallon), cyl (Number of cylinders), disp (Displacement (cu.in.)), hp (Gross horsepower), drat  (Rear axle ratio), wt Weight (1000 lbs), qsec (1/4 mile time), vs (Engine (0 = V-shaped, 1 = straight)), am (Transmission (0 = automatic, 1 = manual)), and gear (Number of forward gears). Dendrograms were used to sort the values of each dimension, independently.

Ideally, we want the cells of similar color to be adjacent to each other such that all the similar patterns are close together. In \Cref{subfig:b}, the attributes were sorted based on their correlation matrix. This is clearly not good enough since for example, "Maserati Bora" and "Ford Pantera L" have similar values on attributes "gear", "am" and "drat", but the differences in the shared values ("wt", "cyl", "disp", "hp", and "carb") does not allow the hierarchical clustering algorithm on the columns to detect them.
While a decision tree would have chosen one attribute per cluster, in the best case, a heat map merges the set of dimensions. A fully sorted heat map would allow us to see if there is a missing attribute, where a set of points belong to several clusters at the same time. In the example of~\Cref{fig:heatmap}, there are two main clusters: the first one with dimensions "wt", "cyl", "disp", "hp", and "carb", the second one with dimensions "vs", "mpg", "drat", "am", and "gear". However, there is a cluster with the attributes of the first cluster with additional dimensions "drat", "am", and "gear", a cluster with attributes "qsec" and "vs" that intersects cluster 2, and a subcluster of cluster 1 with attributes "wt" and "disp". These are not easily observed in \Cref{subfig:b}, but we see examples of these types of clusters in the experiments section.

Heat maps with rectangular clusters are a special case of the heat map sorting problem that we discuss.
A (mathematical) graph is a pair $G=(V,E)$ where $V$ is the set of vertices and $E$ is a subset of $V\times V\setminus \{\{v,v\}\mid v\in V\}$, called the set of the edges of $G$.
Hypergraphs are the generalization of graphs where each edge is a subset of (possibly more than $2$) vertices.
Using the adjacency matrix of hypergraphs $G=(V,E)$ as the heat map where the vertices are labeled based on their connected components, heat map sorting puts vertices with similar adjacency lists near each other, which means it finds subsets of vertices $V_1,V_2\subset V$ such that $V_1\times V_2\subset E$.

\subsection{Contributions}
In \Cref{sec:hardness}, we discuss the computational complexity of heat map sorting (HMS).
We prove that heat map sorting and some of its special cases are NP-hard by reductions from the Hamiltonian path and set cover problems. The special cases are where no dimension belongs to more than a cluster (dimension-disjoint), where having certain values for certain clusters means the point belongs to a cluster (rectangular clusters), and where no point or dimension belongs to more than one cluster (disjoint and dimension-disjoint rectangular clusters). We also discuss the hardness of HMS in MPC based on the connectivity problem.
HMS with disjoint and dimension-disjoint rectangular clusters is polynomial-time solvable.

In \Cref{sec:alg}, we give an approximation algorithm for a special case of heat map sorting and an exact exponential-time algorithm. We also give a heuristic algorithm for HMS.

In \Cref{sec:experiment}, we present and discuss the empirical results of running some of these algorithms on two graph datasets and compare them to $k$-means and DBSCAN clustering algorithms with LSH as the dimensionality-reduction method.
\section{Preliminaries}
\subsection{Problem Definition: Heat Map Sorting (HMS)}
Binary heat map sorting is defined as follows.
A binary matrix $M$, with $D$ as the labels of its rows and $P$ as the labels of its columns is given. A clustering based on $M$ is the set of clusters $C_i$, for $i=1,\ldots,k$ that are the connected components of the graph where two columns are adjacent if they are equal in one dimension and let $D_i$ be the set of rows with value $1$ in $M$ in the columns of cluster $C_i$.
Also, let $P_i$ be the set of points in $C_i$.
The number of preserved clusters is the maximum number of clusters that can be shown using a heat map.

In special cases, one of the sets $P_i$, $C_i$, or $D_i$ can be removed without changing the objective function:
\begin{itemize}
\item $P_i$: if the clusters are disjoint (no point belongs to more than one cluster),
\item $D_i$: if the clusters are dimension-disjoint (no dimension belongs to more than one cluster), and
\item $C_i$: if there is only one cluster.
\end{itemize}

The general version of heat map sorting uses several colors (unlike the binary heat map which is black and white). Each color is an instance of the binary heat map sorting, but the permutations have to simultaneously solve $\size{C}$ binary HMS problems. One way to convert the general case to the binary case is to use discretization using multiples of $\epsilon$, for an arbitrary constant $\epsilon>0$. This gives $1/\epsilon$ different values, and for each of these values, we solve a binary instance and return the one with the maximum number of clusters. This modification affects the cost of $k$-means and $k$-medians clustering problems by a factor at most $1+\epsilon$. Alternatively, we can add a new dimension for each of these discretized dimensions.

Given the ordering of the dimensions $D$, the ordering of the clusters is not uniquely defined, as it depends on the ordering of $P$. If the ordering of the subset of the clusters that are preserved in the solution is also given, then by sorting $P$ based on the order of the clusters, the solution can be retrieved. Let $S$ be the tuple that gives such an ordering of the clusters and their dimensions (the solution to the heat map sorting problem).

Let $\CC{(M)}$ denote the set of connected components of a graph with adjacency matrix $M$. Also, let $\pi:D\rightarrow D$ and $\sigma:P\rightarrow P$ be permutations of $D$ and $P$ respectively. Then, the output size of the heat map sorting is
\[
h=\min_{\pi,\sigma}\size{\CC({\pi(D)\times \sigma(P)})},
\]
and S=$(\pi,\sigma)$ gives the sorting.

\begin{definition}[Hypergraph Formulation of HMS]
Build a hypergraph for the heat map sorting problem by adding an edge for the set of dimensions in each cluster and a vertex for each dimension. Let $f$ be the maximum number of clusters that a dimension belongs to.
\end{definition}

\subsubsection{Rectangular Clusters}
If each cluster is the set of elements in the product of a range in each dimension of the matrix, we call the problem heat map sorting with rectangular clusters.
This happens, for example, when each dimension is clustered separately.
Clustering with rectangular clusters is also known as axis-parallel subspace clustering.

\subsubsection{Dimension-Disjoint Rectangular Clusters}
When the clusters consist of disjoint sets of dimensions, i.e., $\forall i,j, D_i\cap D_j=\emptyset$, we call the problem heat map sorting with dimension-disjoint clusters. We show that dimension-disjoint heat map sorting reduces to set cover in \Cref{lemma:still}.
Having clusters whose dimensions are disjoint means the order of clusters does not matter in this case.

\subsubsection{Maximizing The Jaccard Similarity to The Submatrix of HMS}
We use the Jaccard similarity of the submatrix of HMS after removing/merging the rows and columns to the original HMS as the measure of the quality of the solution. The formal definition is given in \Cref{def:density}.
\begin{definition}[Density Ratio]\label{def:density}
In the presence of outlier points and useless attributes, the heat map with clusters might be a smaller matrix $D'\times P'$. To measure the number of outliers and useless dimensions, we use the following formula, which we call the density ratio:
\[
\rho=\frac{\size{D'} \size{P'}}{\size{D} \size{P}}.
\]
\end{definition}
For the special case of two clusters that are considered as a binary classifier, this is the same as the accuracy of the model since it is the ratio of the number of correctly classified points to the number of points.

The decision version of explainable clustering can be reduced to heat map sorting by building a graph with the set of vertices that are points in $n$-dimensional space and the $i$-th dimension of the $j$-th point is $1$ if $i$ and $j$ are in the same cluster.
Minimizing the number of misclassified points $(o)$ in explainable clustering and maximizing the accuracy in heat map sorting are equivalent since the number of dimensions is always $k$ in the explainable clustering:
$
ok=\rho.
$
The accuracy and $k$ are in a trade-off with each other. By fixing the number of clusters to be $k$, we then minimize the number of outliers.

\subsection{Related Previous Work}
\subsubsection{Hardness of Clustering in MPC}
In set cover, a set of subsets $S_1,\ldots,S_k$ of a universe set $U$ is given and the problem asks for the minimum number of these subsets whose union is $U$.
Set cover in sublinear time requires a superlinear number of queries~\cite{indyk2018set}. This hardness result carries on to the MPC model.
Set cover parameterized with the size of the universe has a fixed-parameter algorithm~\cite{cygan2015parameterized} and Hamiltonian path parameterized with the clique cover (the minimum number of cliques in the graph that cover all the vertices of the graph)~\cite{lampis2018parameterized}.

Clustering problems $k$-means and $k$-medians in a metric space take a set of points and an integer $k$ and find $k$ of the input points as centers such that the distances from the points to their nearest centers are minimized, where the cost function of $k$-means is the $\ell_2$-norm (the square root of the sum of squares) of these distances and the cost of $k$-medians is the $\ell_1$-norm (the sum of absolute values) of these distances.

Single-linkage clustering takes a set of points and merges the closest pair of points at each step until reaching $k$ connected components. The tree of the merges in this algorithm (hierarchical clustering problems in general) is called a dendrogram.

Density-based clustering (DBSCAN) takes a set of points $P$, a radius $r$, and an integer $f$ and finds the connected components of the graph with a vertex for each disk of radius $r$ centered at a point of $P$ and an edge between two disks if they each have at least $f$ points and they intersect.
Each connected component is a cluster and points whose disks have less than $f$ points but intersect with a disk with more than $f$ points are considered as the members of the cluster of that disk.

Single-linkage and DBSCAN use connectivity, which is conjectured to be hard in MPC.

\subsubsection{Locality-Sensitive Hashing for Hamming Distance}
The Hamming distance of two strings is the number of positions where the characters differ. For binary vectors, we can convert them into strings and use the Hamming distance. Note that reordering the dimensions does not change the Hamming distance and $\ell_p$ distances.

Locality-sensitive hashing (LSH) is a method for solving approximate nearest neighbors in high dimensions. LSH methods that use random projections or take subsets of the dimensions also provide dimensionality reduction, in addition to preserving the distances. Locality-sensitive hashing for Hamming distance using a subquadratic number of subspaces exist~\cite{gionis1999similarity,indyk1998approximate,datar2004locality,andoni2014beyond}.
This method has been used in hierarchical clustering before~\cite{cochez2015twister}.

For a $c$-approximation using the LSH for Hamming distance for $n$ points~\cite{indyk1998approximate}, a set of $n^{1/c}$ subspaces with one dimension must be chosen uniformly at random and the points with the same projection in these subspaces are considered to be approximate near neighbors. This algorithm has a constant probability of success which can be improved to a high probability by sampling $\ln n$ dimensions in each subspace.

\section{The Computational Complexity of HMS}\label{sec:hardness}
We show that heat map sorting is NP-hard, even when the clusters do not share dimensions.

\subsection{Binary HMS}\label{sec:nphard}
For $\size{D}=O(1)$, the problem is polynomial-time solvable since checking all permutations of the dimensions $\size{D}!=O(1)$ would take polynomial time. Also, if $k=O(1)$, all permutations of clusters can be checked, which takes $k!=O(1)$ time.

\begin{theorem}\label{theorem:hard}
Heat map sorting is NP-hard.
\end{theorem}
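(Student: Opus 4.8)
The plan is to reduce from the \textbf{Hamiltonian path} problem, which is NP-complete, constructing an HMS instance in which a yes-instance of Hamiltonian path corresponds exactly to an ordering that attains the optimal (minimum) output size $h=\min_{\pi,\sigma}\size{\CC(\pi(D)\times\sigma(P))}$. Given a graph $G=(V,E)$ with $\size{V}=n$, I would let the columns $P$ be the vertices $V$ and let the rows $D$ encode the edges: for each edge $\{u,v\}\in E$ add a row whose only two $1$-entries sit in columns $u$ and $v$ (essentially the incidence matrix of $G$). A permutation $\sigma$ of the columns is then an ordering of the vertices, and under that ordering an edge-row forms a single contiguous monochromatic block exactly when its two endpoints are placed consecutively; otherwise that edge is split into two blocks.

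First I would express the objective as a function of $\sigma$. Since each edge-row contributes one block when its endpoints are consecutive and two blocks otherwise, $\size{\CC(\pi(D)\times\sigma(P))}$ is driven down precisely by orderings that make as many edges consecutive as possible. In a linear ordering of $n$ vertices there are only $n-1$ consecutive pairs, so at most $n-1$ edges can be made consecutive, and this bound is attained if and only if \emph{every} consecutive pair is itself an edge, i.e.\ if and only if $\sigma$ traces a Hamiltonian path of $G$. Consequently the minimum output size meets a fixed threshold, computable in polynomial time from $n$ and $\size{E}$, exactly when $G$ has a Hamiltonian path. Because the matrix has $\size{E}$ rows and $n$ columns, the construction is polynomial, which would establish \Cref{theorem:hard}.

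The main obstacle is forcing the HMS objective to capture the ``consecutive pairs are edges'' condition and nothing else. Two subtleties need care. First, the extra freedom in the row permutation $\pi$ must not allow the component count to drop through \emph{vertical} merging of $1$-cells belonging to distinct edge-rows that share a vertex (hence a column); I would block this by inserting an all-zero separator row between consecutive edge-rows, so that monochromatic $1$-regions can only grow horizontally within a single edge-row and the two-dimensional $\size{\CC(\cdot)}$ reduces to the sum over rows of horizontal block counts. Second, I must verify that edges incident to a common vertex interact correctly, so that preserving one edge's block as contiguous does not spuriously preserve a second. Designing this separator gadget so that both effects are controlled, while keeping the background $0$-regions from contributing to the count in a way that breaks the threshold argument, is where the real work lies; once the gadget is fixed, the equivalence with Hamiltonian path follows from the block-counting identity above.
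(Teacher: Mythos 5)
Your reduction starts from the same source problem as the paper's (Hamiltonian path), but your construction is different, and the difference is precisely where it breaks. The paper indexes the rows by length-two paths $(v,u,w)$ of $G'$ and places a single $1$ per row (in the column of the middle vertex $u$), so rows never carry two $1$s that must be made horizontally contiguous; you instead use the incidence matrix, with two $1$s per edge-row, and your whole argument rests on the identity ``number of components $=$ sum over rows of horizontal block counts.'' That identity requires that no two $1$-cells ever merge vertically, and the gadget you sketch to enforce this fails: the all-zero separator rows are themselves subject to the row permutation $\pi$, which is under the minimizer's control. The minimizer simply permutes all separator rows to the bottom, after which edge-rows sharing a vertex can be made vertically adjacent and merge, and your threshold $2\size{E}-(n-1)$ no longer characterizes anything.

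Once vertical merging is possible, the quantity $\min_{\pi,\sigma}\size{\CC(\pi(D)\times \sigma(P))}$ for your matrix does not detect Hamiltonian paths; it measures how well edges can be chained through shared vertices. Two small examples make this concrete. For the path $P_4$ on vertices $a,b,c,d$ (which has a Hamiltonian path), order the columns $a,b,c,d$ and the rows $ab,bc,cd$: the three edge-blocks chain together through the shared columns $b$ and $c$ into a \emph{single} component, far below your threshold $2\cdot 3-(4-1)=3$. For the star $K_{1,3}$ with center $c$ (which has no Hamiltonian path), order the columns $a,c,b,d$ and stack the rows $ca,cb,cd$: all three rows merge through column $c$, giving $2$ components, again below the threshold $3$. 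So both yes- and no-instances beat the threshold, and the claimed equivalence is false as stated, not merely unproven. You were right that designing the separator gadget is ``where the real work lies,'' but as proposed the gadget is defeated by $\pi$, and without it the reduction does not go through; this is the step that the paper's single-$1$-per-row construction is designed to avoid.
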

\begin{proof}
We reduce the Hamiltonian path problem to heat map sorting. Let $G'=(V',E')$ be the graph from an instance of the Hamiltonian path problem. Define an instance of the heat map sorting using matrix $G$ as follows:
The rows are the set of dimensions $D=\{(v,u,w)\mid (v,u)\in E', (u,w)\in E'\}$, the columns are the set of points $P=V'$, and there is a $1$ at row $(v,u,w)$ only in column $u$.
Each permutation of the columns is a permutation of the vertices, so, one of them is the Hamiltonian path in $G'$, if it exists. The number of dimensions $(\size{D})$ is the number of paths of length $2$ in $G$'.

The heat map sorting problem maximizes the number of connected components that are preserved. A connected component of $G$ is any Hamiltonian path in each connected component of $G'$ because the longest sequence of triples is the triples of consecutive vertices in the Hamiltonian path. Since there is only one connected component in $G'$, then, the Hamiltonian path also maximizes the number of clusters that are preserved. If $G'$ does not have a Hamiltonian path, then, it is easy to check that the sequence of vertices that maximizes the number of clusters of $G'$ that are preserved in $G$, which is given in the order of the columns of $G$, does not form a path.
\end{proof}

\subsection{HMS with Dimension-Disjoint Clusters}

\begin{theorem}\label{lemma:still}
Deciding whether a heat map sorting instance with dimension-disjoint clusters preserves at least $k$ clusters assuming it has to cover all the points is equivalent to set cover and therefore NP-complete.
\end{theorem}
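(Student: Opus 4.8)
The plan is to prove the equivalence by giving polynomial-time reductions in both directions, preceded by the easy observation that the problem lies in $\mathrm{NP}$: a certificate is the pair of permutations $(\pi,\sigma)$ together with the list of clusters claimed to be preserved, and from these one can recompute $\CC(\pi(D)\times\sigma(P))$, verify that each listed cluster is a connected component whose point set is contiguous under $\sigma$, that at least $k$ are preserved, and that their union is $P$, all in polynomial time.

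For the direction from dimension-disjoint HMS to set cover, I would first use dimension-disjointness to discard the row ordering as a source of difficulty: since $D_i\cap D_j=\emptyset$ for $i\neq j$, placing the rows of each $D_i$ in one contiguous block keeps every $D_i$ intact, such a $\pi$ always exists, and the order of the clusters is irrelevant. The only binding freedom is then the point ordering $\sigma$, under which a preserved cluster $C_i$ must occupy a contiguous interval of columns. I would map this to set cover with universe $P$ and with one set $P_i$ per cluster: the hypothesis that the preserved clusters cover all points is exactly the covering condition, and the threshold on the number of preserved clusters becomes a corresponding bound on the size of the cover, once the effect of the contiguity constraint on which families may be co-preserved is accounted for.

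For the reverse direction, given a set cover instance with universe $U$ and sets $S_1,\dots,S_m$, I would construct a dimension-disjoint HMS instance whose columns encode the elements of $U$ (padded with gadget columns) and whose clusters encode the sets. Each $S_j$ is assigned its own private block of rows $D_j$, which enforces dimension-disjointness, and the column for $e\in U$ is set to $1$ on the rows of $D_j$ exactly when $e\in S_j$. The gadget columns are designed to force the intended correspondence, so that an ordering $\sigma$ preserving the target number of clusters while covering every point exists if and only if the chosen sets form a cover of $U$ of the prescribed size.

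The main obstacle, shared by both directions, is the interval constraint imposed by $\sigma$. Preserving several clusters simultaneously requires their point sets to be realizable as contiguous intervals on a single line, which is a consecutive-ones / interval-graph condition rather than a pure covering condition; when a point lies in several clusters the corresponding intervals are forced to overlap. I therefore expect the crux to be proving, with the help of the cover-all hypothesis and the gadget columns, that interval-realizability contributes no combinatorial strength beyond covering---equivalently, that a family of clusters can be simultaneously preserved exactly when it meets the set-cover condition---and, hand in hand with this, reconciling the ``at least $k$ preserved'' threshold with the minimization in set cover so that yes-instances correspond on the nose.
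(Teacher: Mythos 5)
Your overall architecture (NP membership plus reductions in both directions, with universe elements mapped to columns and sets to clusters) matches the paper's, but the proposal has genuine gaps, one in each direction. In the set-cover-to-HMS direction, your concrete construction fails precisely in the only interesting case, namely overlapping sets. If each $S_j$ gets a private row block and the column of $e$ carries $1$ on block $j$ exactly when $e\in S_j$, then a column shared by $S_j$ and $S_{j'}$ carries $1$s from both blocks; since the paper defines $D_i$ as the set of rows with value $1$ in the columns of cluster $C_i$, this puts block $j'$ inside $D_j$, so the instance you build is \emph{not} dimension-disjoint, and moreover the columns of overlapping sets fall into a single connected component, so the clusters of your instance are unions of overlapping sets rather than the sets themselves. (For pairwise disjoint sets, set cover is trivially polynomial, so overlaps are exactly the case the reduction must survive.) You defer the repair to ``gadget columns designed to force the intended correspondence,'' but no gadget is specified, so this direction is a plan rather than a proof. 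The paper avoids this trap by defining the dimensions as the maximal subsets of clusters that contain the same set of points (an equivalence-class construction), which is its device for coping with shared elements.

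In the HMS-to-set-cover direction, you correctly identify the crux---that simultaneous preservation imposes a consecutive-ones/interval-realizability condition on the chosen clusters, not just a covering condition---but you explicitly leave it as an expectation (``I therefore expect the crux to be proving\ldots'') rather than proving it; the paper's corresponding step is its two-stage sort (rows by cluster, then columns by the sorted rows), which is exactly the argument that dimension-disjointness makes interval layout automatic. The same applies to the threshold mismatch between ``at least $k$ preserved'' and ``at most $k$ sets'': the paper disposes of it concretely by running the decision procedure for every value $1,2,\ldots,k$, while you only note it must be ``reconciled.'' So the two steps you yourself flag as the heart of the equivalence are precisely the ones left open; to complete the argument you would need to (i) replace the gadget placeholder with an explicit construction (or adopt the paper's equivalence-class dimensions) and verify both dimension-disjointness and the cluster correspondence under the connected-component definition, and (ii) prove that, under the cover-all-points hypothesis, any covering subfamily of clusters can actually be realized as contiguous intervals under a single ordering $\sigma$.
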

\begin{proof}
Given the heat map sorting, checking if all the points are covered can be done by checking if all the points that belong to the same cluster appear together. This takes $O(kdn)$ time, where $n$ is the number of points, $d$ is the number of dimensions, and $k$ is the number of clusters.

We reduce the set cover problem to heat map sorting.
Define a column for each of the elements of the universe as the input points in the heat map sorting problem. For each set of the set cover problem, define a cluster. Define the dimensions as the maximal subsets of clusters that contain the same set of points. A solution to the heat map sorting problem finds a subset of $k$ clusters that can cover all the input points. This is a solution to the set cover problem with $k$ sets (we need to modify the condition at most $k$ clusters to exactly $k$ clusters, which is easy as the decision for all values $1,2,\ldots,k$ suffices for this reduction, which makes it a polynomial-time reduction).

To show the problems are equivalent, we do the reverse reduction, too.
Since the dimensions are disjoint, in the set of rows for each cluster, some of the columns are in the cluster and some of them are not. This means there is a set of intervals in each ordering of the points (columns) that show the members of each cluster. The goal is to find an ordering that preserves at least $k$ clusters. Now, we sort the matrix in two steps. Sort the rows based on the cluster number to put the dimensions of each cluster adjacent to each other. Then, sort the columns (points) based on these new dimensions. For any maximal subset of clusters that have some points in common but not all points, create a set in the set cover problem. A solution to the set cover problem solves heat map sorting.
\end{proof}

A corollary of \Cref{lemma:still} is \Cref{lemma:preserve}.
\begin{lemma}\label{lemma:preserve}
The reduction of \Cref{sec:disjoint} from heat map sorting with dimension-disjoint clusters to set cover is approximation factor preserving.
\end{lemma}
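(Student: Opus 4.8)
The goal is to show that the reduction from dimension-disjoint HMS to set cover established in \Cref{lemma:still} is approximation-factor preserving, meaning that an $\alpha$-approximate solution to one instance yields an $\alpha$-approximate solution to the other. The plan is to exploit the fact that \Cref{lemma:still} already exhibits reductions in \emph{both} directions, and to verify that each direction maps objective values to objective values exactly (i.e., that the bijection between feasible solutions preserves cost, not merely feasibility).

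First I would fix the correspondence precisely. In the forward reduction, each set $S_j$ of the set cover instance becomes a cluster, and preserving a cluster in the HMS solution corresponds to selecting the associated set. Because the clusters are dimension-disjoint, the dimensions partition into blocks, one block per cluster, so preserving cluster $i$ is an all-or-nothing event that is independent across clusters. I would argue that a family of clusters is simultaneously preservable (their point-intervals can be made nested/consistent under a single column ordering) if and only if the corresponding sets form a valid set cover of the universe. This equivalence is exactly what \Cref{lemma:still} proves; here I only need to observe that the number of preserved clusters equals the number of chosen sets, with no additive or multiplicative slack.

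The key step is then the cost-preservation check. Since covering all points corresponds to preserving the full set of required clusters, a solution that preserves $k$ clusters corresponds bijectively to a set cover using exactly $k$ sets, and vice versa via the reverse reduction. Hence if $\OPT_{\mathrm{HMS}}$ and $\OPT_{\mathrm{SC}}$ denote the optimal objective values, the bijection gives $\OPT_{\mathrm{HMS}}=\OPT_{\mathrm{SC}}$. For any feasible set cover using $m$ sets, the induced HMS solution preserves $m$ clusters, and conversely. Therefore an $\alpha$-approximate set cover (using at most $\alpha\cdot\OPT_{\mathrm{SC}}$ sets) maps to an HMS solution with at most $\alpha\cdot\OPT_{\mathrm{HMS}}$ clusters, and symmetrically in the other direction, which is precisely the definition of an approximation-factor-preserving reduction.

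The main obstacle I anticipate is handling the bookkeeping around the decision-to-optimization conversion. \Cref{lemma:still} is phrased as a decision problem ("preserves at least $k$ clusters") and includes the technical step of replacing "at most $k$" by "exactly $k$" by sweeping over $1,2,\ldots,k$. For an approximation-preserving claim I must instead treat both problems as minimization (set cover minimizes the number of sets; the HMS side minimizes the number of preserved clusters needed to cover all points) and confirm that the feasibility constraint ("cover all points") is identical on both sides so that no feasible solution is lost or gained under the map. The delicate point is ensuring the maximal-common-subset construction of dimensions in the reverse reduction does not merge or split clusters in a way that changes the count; once that is verified, the equality of objective values and hence the factor-preservation follows immediately.
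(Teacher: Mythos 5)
Your proposal is correct and follows essentially the same route as the paper: the paper gives no standalone proof, asserting \Cref{lemma:preserve} as a direct corollary of the bidirectional equivalence in \Cref{lemma:still}, which is precisely what you do. Your write-up merely makes explicit the step the paper leaves implicit --- that the solution correspondence (preserved clusters $\leftrightarrow$ chosen sets, under the shared ``cover all points'' feasibility constraint) preserves the objective value exactly, so $\OPT$ values coincide and any $\alpha$-approximate set cover maps back to an $\alpha$-approximate HMS solution.
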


Since the approximation factor is preserved in this reduction, the $O(\log n)$-approximation algorithm for set cover and the $f$-approximation for set cover where each element appears in at most $f$ sets (maximum degree $f$ in the graph representation) give the same approximation ratios for heat map sorting.

\subsubsection{HMS with Disjoint and Dimension-Disjoint Clusters}\label{sec:seq}
Based on the reduction from dimension-disjoint clusters to heat map sorting, $f$ is the number of clusters that include a dimension. If in addition to clusters being dimension-disjoint, the points also belong to at most one cluster (the clusters are also disjoint), then $f=1$. So, the linear programming rounding algorithm gives an exact solution.
This gives \Cref{lemma:p}.

\begin{theorem}\label{lemma:p}
Heat map sorting with clusters that are both disjoint and dimension-disjoint is solvable in polynomial time.
\end{theorem}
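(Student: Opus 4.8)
The plan is to route through the set-cover connection already established in the excerpt and show that the doubly-disjoint hypothesis collapses the induced set-cover instance to a trivial one. First I would invoke the reduction from HMS with dimension-disjoint clusters to set cover of \Cref{lemma:still}, which by \Cref{lemma:preserve} is approximation-factor preserving. In that reduction the parameter $f$ is the largest number of clusters to which a single dimension belongs, equivalently the largest number of sets to which any single element of the universe belongs.

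The next step is to argue that imposing disjointness of the clusters forces $f=1$. Dimension-disjointness already guarantees that every dimension lies in at most one cluster; combining this with point-disjointness, no dimension can be shared between two clusters, so in the induced set-cover instance each element appears in at most one set. Because the decision variant of \Cref{lemma:still} requires every point to be covered, each relevant element in fact appears in exactly one set.

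Then I would appeal to the structure of set cover when $f=1$. The covering linear program has, for each element $e$, a constraint that reduces to $x_{S(e)}\ge 1$ for the unique set $S(e)$ containing $e$; minimizing $\sum_S x_S$ therefore forces $x_{S(e)}=1$ for every set containing some element and $x_S=0$ otherwise, which is already integral. Equivalently, the standard $f$-approximation via LP rounding becomes exact at $f=1$. Pulling this optimum back through the approximation-factor-preserving reduction yields an exact polynomial-time algorithm for HMS with disjoint and dimension-disjoint clusters.

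As a direct sanity check I would also verify the claim combinatorially: partition the rows and the columns by cluster index and lay out each cluster's dimensions and points as a contiguous block of $\pi$ and $\sigma$; since both partitions are genuinely disjoint, no two clusters compete for the same row or column position, so all $k$ clusters are preserved simultaneously, which is trivially optimal. The main obstacle is the bookkeeping in the first two steps — confirming that the maximal subsets of clusters sharing the same point set that play the role of elements cannot be shared once both the points and the dimensions are disjoint, and checking that the coverage requirement keeps the degenerate LP feasible rather than infeasible.
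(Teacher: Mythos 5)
Your proposal is correct and follows essentially the same route as the paper: the paper likewise invokes the set-cover reduction of \Cref{lemma:still}, observes that the combined disjointness assumptions force $f=1$, and concludes that the LP-rounding $f$-approximation for set cover becomes exact, yielding a polynomial-time algorithm. Your closing combinatorial sanity check (laying out each cluster as a contiguous block of rows and columns) is a small addition the paper does not spell out, but the core argument is identical.
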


However, linear programming is P-complete and we need another algorithm for the parallel model. Fortunately, any maximal solution would give an exact solution for this case.
To compute a maximal solution, we fix one of the points with a high degree in terms of $k$ as the first column $c_1$, and then at each step $i$, $(i>1)$, we add the point that is most similar to $c_{i-1}$ and remove it from the set.

\subsubsection{Hardness of HMS in MPC}
Verifying the solution of a heat map sorting instance requires computing the connected components in MPC. The problem of computing the connected components in MPC using a constant number of rounds is open~\cite{yaroslavtsev2018massively}. Since the adjacency matrix of a hypergraph can be defined similarly to a heat map, computing the connected components in MPC reduces to verifying a heat map sorting.
We formalize this in \Cref{theorem:mpc}.
\begin{theorem}\label{theorem:mpc}
Verifying a given heat map sorting is at least as hard as st-connectivity in MPC. 
\end{theorem}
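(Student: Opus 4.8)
The plan is to reduce st-connectivity to the verification problem, which shows that any MPC algorithm that verifies a heat map sorting can be turned into one that solves st-connectivity with only $O(1)$ additional rounds. Recall that verifying a proposed solution $S=(\pi,\sigma)$ amounts to recomputing the clusters induced by $M$ — that is, the connected components of the graph on the columns in which two columns are adjacent when they share a $1$ in some dimension — and checking the claimed number of preserved clusters. A verifier is therefore, in particular, a procedure that recovers the component structure $\CC(M)$, so the whole argument reduces to encoding an st-connectivity instance as such a matrix.

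First I would fix the encoding. Given an st-connectivity instance $(G=(V,E),s,t)$, I build a binary heat map $M$ with one column per vertex and one row (dimension) per edge, placing a $1$ in row $e$ and column $v$ exactly when $v$ is an endpoint of $e$. Under this incidence encoding, two columns share a $1$ in some dimension precisely when the corresponding vertices are adjacent in $G$, so the cluster graph on the columns is isomorphic to $G$ and $\CC(M)$ is exactly the partition of $V$ into the connected components of $G$. The construction is entirely local: each edge of $G$ contributes one row touching two columns, so it is computable in $O(1)$ MPC rounds within sublinear per-machine memory and linear total memory, matching the model in which the hardness of connectivity is conjectured.

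Next I would convert the verifier's decision output into the st-connectivity answer. Since the verifier reports the number of preserved clusters, I would invoke it on the encoding of $G$ and on the encoding of the augmented graph $G+\{s,t\}$ (obtained by adding a single row for the edge $\{s,t\}$) and compare the two component counts: $s$ and $t$ lie in the same component of $G$ exactly when adding the edge $\{s,t\}$ leaves the count unchanged. Both encodings are produced in $O(1)$ rounds, and comparing two integers is a trivial aggregation, so an $O(1)$-round verifier yields an $O(1)$-round algorithm for st-connectivity.

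The main obstacle I anticipate is faithfulness to the MPC resource constraints rather than the combinatorics, which are routine. Concretely, I must verify that the incidence encoding and the count comparison never exceed sublinear memory per machine and keep total memory linear in $\size{V}+\size{E}$, and I must pin down the adjacency semantics so that no spurious edges arise (for example, from two columns both carrying $0$ in a dimension); the incidence encoding is chosen precisely so that adjacency is witnessed only by shared $1$-entries, which sidesteps that issue. Once these bookkeeping points are settled, the chain ``verifier $\Rightarrow$ connected components $\Rightarrow$ st-connectivity'' gives the claimed hardness.
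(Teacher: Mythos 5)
Your proposal is correct and follows essentially the same route as the paper: the paper's only justification is the observation that a (hyper)graph's adjacency/incidence matrix is itself a heat map, so verifying a sorting --- i.e., computing the number of preserved clusters, which are the connected components --- would solve connectivity in MPC. The paper gives no further detail, so your explicit incidence encoding, the comparison of component counts for $G$ and $G+\{s,t\}$, and the check that the reduction runs in $O(1)$ rounds within the memory constraints simply flesh out what the paper leaves implicit (including your reasonable resolution of the paper's ambiguous ``equal in one dimension'' adjacency rule in favor of shared $1$-entries).
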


\section{Algorithms}\label{sec:alg}

\subsection{An Approximation Algorithm for Dimension-Disjoint HMS}\label{sec:disjoint}
If we assign the points to their lexicographically first  and merge the clusters that share at least one dimension,  we get an instance of disjoint clusters that are also dimension-disjoint.
\begin{theorem}\label{theorem:dimension}
If the maximum number of clusters that share a dimension is $f_1$ and the maximum number of clusters a point belongs to is at most $f_2$, then a maximal solution is a $f_1f_2$-approximation.
\end{theorem}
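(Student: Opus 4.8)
The plan is to read the algorithm as two successive contractions of the instance, each collapsing one source of overlap, followed by an exact solve on the resulting tame instance. Writing $\OPT$ for the optimal number of preserved clusters on the given instance, I would let $\OPT_1$ denote the optimum after merging every group of clusters that share a dimension (so the instance becomes dimension-disjoint), and $\OPT_2$ the optimum after additionally reassigning each point to its lexicographically first cluster (so it becomes disjoint as well). The final instance has clusters that are both disjoint and dimension-disjoint, so by \Cref{lemma:p} — together with the observation that for this case any maximal ordering is already optimal — the maximal solution computed by the algorithm attains $\OPT_2$ exactly. It therefore suffices to establish the two chained inequalities $\OPT_1 \ge \OPT/f_1$ and $\OPT_2 \ge \OPT_1/f_2$, whose combination gives $\ALG = \OPT_2 \ge \OPT/(f_1 f_2)$.

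For the dimension-merging step I would argue through the Hypergraph Formulation of HMS: dimensions are vertices and each cluster is the hyperedge of its dimension set, with every vertex meeting at most $f_1$ hyperedges (this $f_1$ is exactly the parameter $f$ of that definition). Fix a family of $\OPT$ clusters simultaneously preserved by an optimal ordering, and charge each such cluster to the merged (dimension-disjoint) cluster containing it. The quantitative claim I would aim to prove is that at most $f_1$ of the \emph{preserved} clusters are charged to any single merged cluster, since distinct preserved clusters that coalesce must be linked through a shared dimension and each dimension is shared by at most $f_1$ clusters. This yields that the merged instance still exhibits at least $\OPT/f_1$ preservable clusters, i.e. $\OPT_1 \ge \OPT/f_1$.

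The point-disjointness step I would handle through the reduction already established in \Cref{lemma:still} together with its approximation-preserving property in \Cref{lemma:preserve}: a dimension-disjoint instance is equivalent to a set cover instance in which each element (point) lies in at most $f_2$ sets (clusters). The lex-first reassignment plays the role of the combinatorial bounded-frequency rounding for such instances, and the standard frequency-$f$ guarantee then gives $\OPT_2 \ge \OPT_1/f_2$. Here again I would make the charging explicit: group the preserved clusters by the points they lose under reassignment, and use that each point is claimed by at most $f_2$ clusters to bound the collisions.

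I expect the dimension-merging bound to be the main obstacle. Taking connected components under the relation ``shares a dimension'' could in principle coalesce an unbounded number of clusters into one — through a long chain in which each consecutive pair shares a \emph{different} dimension — which would defeat the $f_1$ factor even though every individual dimension is shared by only $f_1$ clusters. The argument must therefore exploit that we only need to retain the clusters that were simultaneously preserved by the optimum, so that the relevant conflict multiplicity is governed by the per-dimension bound $f_1$ rather than by the raw size of a connected component. Ruling out these long chains among preserved clusters is the delicate part; by contrast the point-reassignment factor $f_2$ is comparatively routine once the equivalence of \Cref{lemma:preserve} is invoked.
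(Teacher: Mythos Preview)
Your proposal follows essentially the same two-factor decomposition as the paper: reduce to a disjoint, dimension-disjoint instance via the two contractions, and bound the loss by $f_1$ (dimension overlaps) times $f_2$ (point overlaps), invoking \Cref{lemma:preserve} and the frequency-$f$ set-cover bound. The paper's proof applies the set-cover $f$-approximation to both stages rather than doing an explicit hypergraph charging for the first, and it does not isolate the ``long chain'' issue you flag; your plan is the same argument, just articulated with more care about where the delicate step lies.
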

\begin{proof}
Based on \Cref{lemma:preserve}, the approximation ratio of the set cover algorithm is preserved, which is $f_1$ for the matrix $\pi(D)\times \tau(\{C_1,\ldots,C_k\})$ and $f_2$ for the matrix $\pi(D)\times \sigma(P)$, where $\tau: \{C_1,\ldots,C_k\} \rightarrow 2^{\{C_1,\ldots,C_k\}}$ is the merging of clusters.

If the algorithm chooses the worst order of covering the clusters and the dimensions, each of them is at most repeated $f_1$ times and $f_2$ times, respectively. So, each dimension repeats at most $f_1$ clusters and each cluster repeats at most $f_2$ points, which results in at most $f_1f_2$ repeats of the optimal solution:
\[
\lvert\{ \pi (D)\mid \pi : D\rightarrow D \}\rvert = f_1,\quad \lvert \{ \sigma(P) \mid \sigma: P\rightarrow P\} \rvert = f_2\]
\[ \Rightarrow \size{\CC(\pi(D)\times \sigma(P))} \leq f_1f_2 h.\qedhere
\]
\end{proof}
\subsection{A FPT Algorithm for HMS}
The goal is to find a permutation of the dimensions $(D)$ such that the maximum number of intervals are preserved, where the intervals are the set of consecutive dimensions (rows of the matrix) that contain each cluster.
This problem can be formulated as finding the longest path in the hypergraph representation of heat map sorting: one vertex for each dimension and a hyperedge for each subset of dimensions $(S_1,\ldots,S_k)$.

Let $h$ be the number of vertices of degree more than $2$ in the hypergraph and let $\psi=\max_{v\in D} \deg(v)$, where $\deg(v)$ is the degree of vertex $v$ in the hypergraph.

A sketch of the algorithm has the following four phases:
\begin{enumerate}
\item Break the sets $S_1,\ldots,S_k$ into smaller maximal subsets (the set $P$) such that each element in a set $p\in P$ intersects the same subset of $S_1,\ldots,S_k$.
\item Build a directed-acyclic graph (DAG) starting from the sets of $P$ that do not contain another set of $P$ as the first level and merge them to build the rest of the sets in $P$.
\item Assign weight $1$ to nodes of the sets in $S_1,\ldots,S_k$ and $0$ to the rest of the nodes of the DAG.
\item Find a set of disjoint binary trees with the maximum total weight by enumerating all cases of choosing two of the children of each node.
\end{enumerate}
\Cref{alg:longest} gives the detailed version.


The time complexity of the first phase is $O(d^2\log d)$ for sorting the sets $I_u, \forall u\in U$ and building a tree by merging them.
Building a DAG by merging can be done by a breadth-first search, which takes $O(d)$ time. The third step takes $O(d^2)$ time to compare the equality of the sets and assign the numbers based on the sorted order of their size from the first step.
The time complexity of the fourth step is:
\begin{align*}
\sum_{i=1}^{\size{P}} \binom{\size{P}}{i} i &= \sum_{i=1}^{\size{P}} \frac{1}{\size{P}+1} \binom{\size{P}+1}{i-1}
=\frac{1}{\size{P}+1} \sum_{i'=1}^{\size{P}-1} \binom{\size{P}+1}{i'}\\
&= \frac{1}{\size{P}+1} (\sum_{i'=1}^{\size{P}+1} \binom{\size{P}+1}{i'}- \binom{\size{P}+1}{\size{P}+1}-\binom{\size{P}+1}{\size{P}})\\
&=\frac{1}{\size{P}+1}  (2^{\size{P}+1}-1-(\size{P}+1))=\frac{2^{\size{P}+1}-1}{\size{P}+1}-1 = O(\frac{2^{\size{P}}}{\size{P}}).
\end{align*}

For $d\leq n$, the total time complexity of the algorithm is $O(n^2\log n+\frac{2^{\size{P}}}{\size{P}})$. So, the algorithm is fixed-parameter tractable in terms of $\size{P}$. The assumption that $d\leq n$ is natural since removing repeated dimensions can be done in linear time and is parallelizable (this happens because of the pigeonhole principle as the number of dimensions is more than the number of points).

\begin{center}
\begin{minipage}{\linewidth}
\begin{algorithm}[H]
\begin{algorithmic}[1]
\Procedure{MaxTree}{tree $F$, DAG $T$ with weights $w$}
\If{$T=\emptyset$}
\State{Return $0$}
\EndIf
\For{$v\in T$}
\State{A=\Call{MaxTree}{$F\cup \{v\}$,$T\setminus \{v\}$}+w(v)}
\State{B=\Call{MaxTree}{$F$,$T\setminus \{v\}$}}
\If{$A>B$}
\State{Return $A$}
\Else
\State{Return $B$}
\EndIf
\EndFor
\EndProcedure
\Require{A set of sets $S_1,\ldots,S_k$ of a universe set $U$}
\Ensure{A permutation of the elements of $U$}
\State{$S'_i=S_i, \forall i=1,\ldots,k$}
\State{$U=\cup_{i=1}^k S_i$}
\State{Create a DAG $T$ with one node for each element of $U$.}
\While{$\exists S'_i, \size{S'_i}>1$}
\State{Add one vertex to $T$ for each $S'_i$.}
\State{$U'=\cup_{i=1}^k S'_i$}
\State{Connect the nodes for each $S'_i$ in $T$ to the nodes for the elements of $U'$ in $T$.}
\For{$u\in U'$}
\State{$I_u=\{i\mid S'_i\ni u\}$}
\EndFor
\State{Hash the elements of $U$ based on $I_u, u\in U'$.}
\For{each bin $b$ in the hash table}
\State{Merge the elements in $b$ into a single node in $U'$.}
\EndFor
\State{Update the sets $S'_i$ based on the merged sets.}
\EndWhile
\For{$i=1,\ldots,k$}
\State{Find $S_i$ in $T$ and assign weight $1$ to it.}
\EndFor
\State{Return \Call{MaxTree}{$\emptyset$,$T$}}
\end{algorithmic}
\caption{A FPT Algorithm for HMS}
\label{alg:longest}
\end{algorithm}
\end{minipage}
\end{center}
\subsection{HMS for Disjoint Rectangular Clusters with Outliers}
We give an algorithm for disjoint rectangular clusters in multi-color HMS in the presence of outliers using existing clustering algorithms (DBSCAN, $k$-means, and single-linkage clustering) and some basic operations (loops, sequence of executions) and database operations (join, select, aggregation).

These are the main components of the algorithm:
\begin{itemize}
\item
DBSCAN with Euclidean distance: It finds the connected components of the rows (equivalently, the columns) and removes singleton clusters. The Euclidean distance gives the mean-squared error between two vectors, if the dimensions are normalized, which is the case with heat maps.

This does not change the connected components of the graph after removing outliers.
\item
$k$-means with cosine distance: When there are only $k$ clusters, there can be $k$ merged dimensions. The cosine distance is the same as the dot product of the vectors. This is used to merge the dimensions and group the points together.

If two vectors have nearly equal elements, their mean-squared error is smaller than other vectors. So, using cosine distance would not remove them. Using $k$-means would not change correlated attributes, assuming they are far enough from the noise (outliers).
\item
Single-linkage clustering to initialize DBSCAN: As the input of the problem is $k$, finding the minimum radius for which the clusters are separated can be solved using single-linkage clustering. This requires running the rest of the algorithm only once, while discretization of the range of values takes a constant number of rounds and it takes $\epsilon$ as an input parameter, so, setting $\epsilon$ would require something like this (or the user sets the value based on the data).
\end{itemize}

The algorithm runs three steps for the rows and then the columns. These steps are as follows. First, the algorithm uses single-linkage the initialize DBSCAN, then, it runs DBSCAN and merges the dimensions or points using $k$-means.
These three steps can be repeated if doing so would improve the accuracy.

When the distances between the points inside the clusters are less than points in different clusters and outliers, the algorithm is equivalent to the algorithm for finding rectangular clusters in binary HMS.
\section{Empirical Results}\label{sec:experiment}
Explainable clustering is best compared using figures. So, we implement our algorithms and compare their outputs based on both the heat map and the accuracy of the model. As we needed to compare our work to existing methods, we used two accuracy measures: one for the ratio of the correctly classified points to the number of input points and another one for the ratio of the remaining clusters to the number of input clusters (colors).

The experiments are in the form of case studies where several algorithms are implemented on the same dataset one after another. The first dataset has several kilos of points and disjoint rectangular clusters (disjoint and dimension-disjoint clusters). The second dataset is roughly the same size (several kilos of points) and has various types of intersecting clusters (intersecting and nested). The number of dimensions is almost the same as the number of points (medium-size data in a medium number of dimensions).

After explaining the implementations and results for each algorithm in this paper, we compare our results with existing algorithms such as $k$-medoids and DBSCAN. Since they only cluster the points and not the dimensions, we use LSH as a dimensionality reduction tool.

The goal is to compare the power of these models to express clustered data, so, all the input data is labeled. The algorithm that preserves more labels is better.

Note that the figures representing point sets have a lower resolution than the actual data for better visualization and a smaller file (the data was rounded to reduce its size).

\subsection{Datasets, Hardware, and Programming Languages}

\textbf{Datasets:}\\
We use several graphs from the SNAP graph datasets~\cite{snapnets}.

\textbf{Hardware:}\\
The hardware specifications are as follows:
\begin{itemize}
\item CPU: Core i7 2.90GHz processor with 16 threads,
\item Memory: 16GB RAM, and 2TB disk.
\end{itemize}

\textbf{Programming Languages:}\\
The languages used are as follows:
\begin{itemize}
\item
For indexing (mapping), backtracking, and algorithms involving more than a constant number of elements we used C++.
\item
Big data processing including partitioning, filtering using regular expressions, joining datasets with their labels, and sorting and re-indexing (shuffling) was implemented using Linux shell commands by setting the block sizes based on the memory limit of MPC.
\item
For smaller datasets, Microsoft Excel formulas were used in some cases to index points and compute the maximum and minimum values.
\item
Figures were specified as commands and drawn using TikZ package and complied with LuaLaTeX, except for \Cref{fig:tree} which was drawn using Ipe drawing tool.
\end{itemize}

\FloatBarrier
\subsection{Verifying Disjoint and Dimension-Disjoint HMS in MPC}\label{sec:email}
Here, we solve the decision/verification version where the solution is given and the goal is to compute the cost. Disjoint and dimension-disjoint (disjoint rectangular clusters) is the polynomial-time solvable special case of HMS.
\subsubsection{Using The Approximation Algorithm}
The algorithm first indices the vertices based on the spatial index of their heat map matrix. Using these indices, the values of the same block become adjacent. Then, the algorithm counts the number of ones in blocks of size $\sqrt{m}\times \sqrt{m}$. Finally, the algorithm selects the blocks with at least $m/2$. We relaxed the condition of being an exact rectangle (submatrix) to more than a triangle to cover missing points and directed graphs.

This algorithm takes $O(1)$ rounds in MPC since the sum of blocks can be computed locally and filtering them based on a given threshold or sorting them also takes $O(1)$ rounds.

We use email-Eu-core network~\cite{leskovec2007graph,yin2017local} dataset. It is a directed graph with 1005 vertices and 25571 edges.
The labeling of the vertex-disjoint clusters using $42$ labels is also given.
For easier implementation, we use $10^{\lfloor\log_{10}\sqrt{m}\rfloor}$ as the side length of the squares of the blocks.

\Cref{fig:email} shows the whole input.
\Cref{fig:cluster} shows the blocks with at least 50 non-zero values among $10\times 10$ blocks. This gives blocks 22, 23, 32, and 33 with counts 60, 57, 60, and 73. These blocks map to the vertices $\{20,21,\ldots,39\}$.
\begin{figure}
\centering
\begin{subfigure}[b]{0.5\textwidth}
\centering
\includegraphics[scale=0.06]{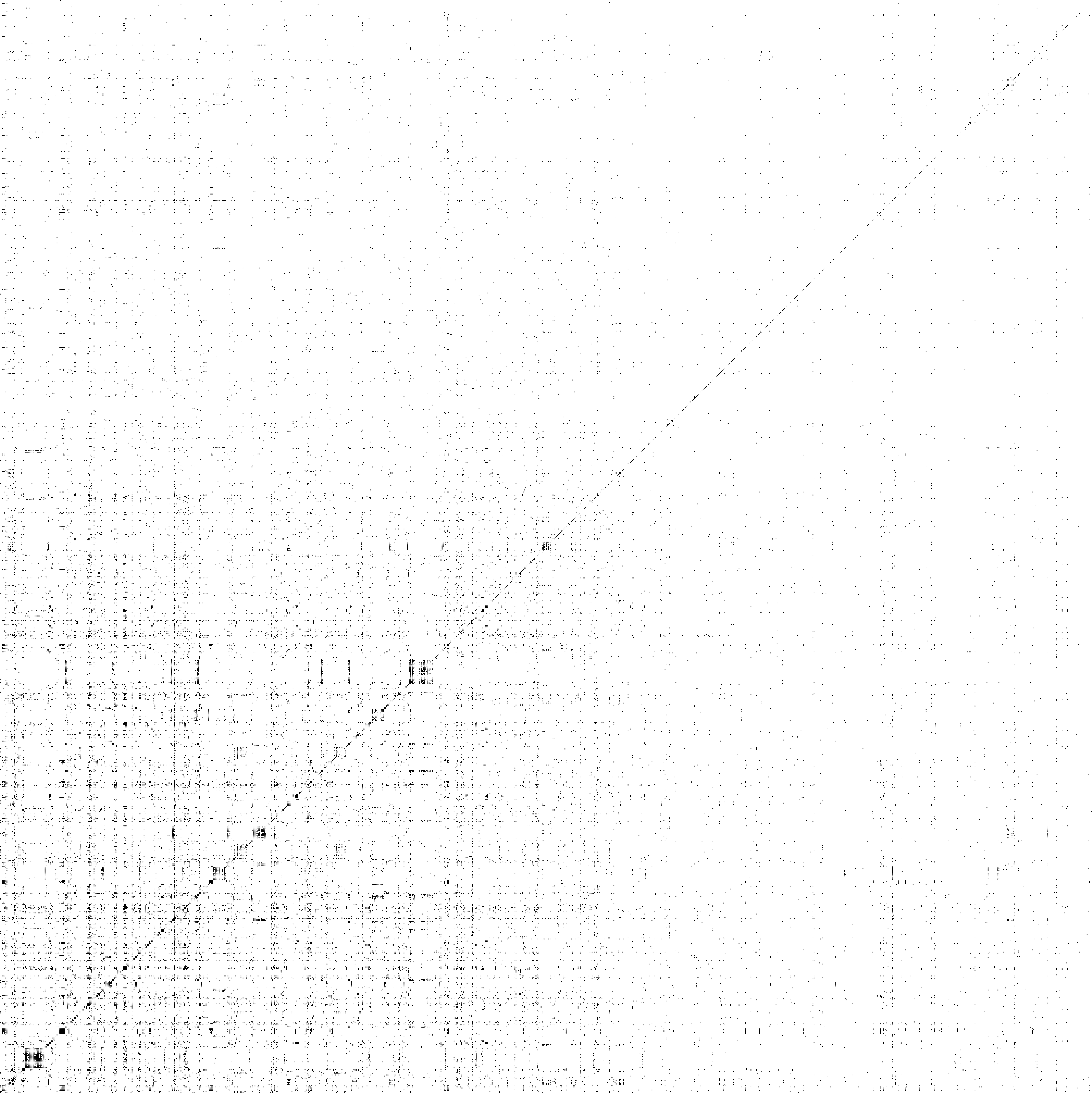}
\caption{The heat map of the adjacency matrix of the graph.}\label{fig:email}
\end{subfigure}\;
\begin{subfigure}[b]{0.35\textwidth}
\centering
\includegraphics[scale=0.1]{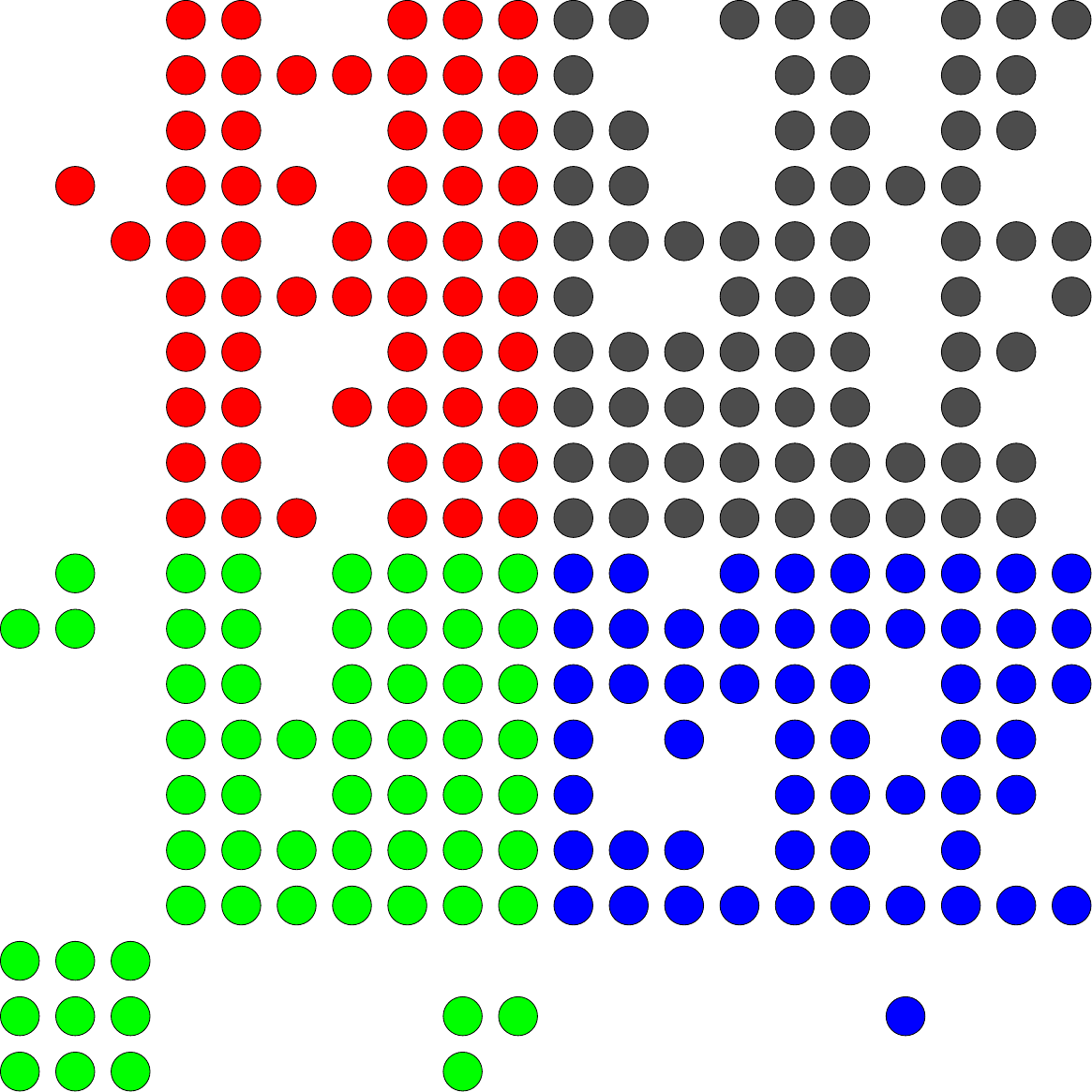}
\caption{A cluster expanding $4$ blocks, and some outliers. Each block is represented by a different color.}\label{fig:cluster}
\end{subfigure}
\caption{Representation of email-Eu-core network}
\end{figure}
We can compute the density ratio of the matrix in its original ordering and using blocks of size $10\times 10$, which is a $100$-approximation of $\rho$:
\[
\rho\approx \frac{4}{6005}\approx 0.07\%.
\]
The same clusters using individual cells instead of blocks have density ratio:
\[
\rho \approx \frac{249}{25571}\approx 0.97\%.
\]
Because of the blocking, this is still a $4$-approximation (a $2$-approximation in each dimension because the range is rounded to the nearest block size).
%
\subsubsection{Using The Exact Algorithm}\label{sec:email2}
Consider the graph from \Cref{sec:email}.
Since $(42!)^2\approx 2 \times 10^{102}$, we cannot check all the permutations of clusters. So, we prune away clusters that have a few members.

Based on the labels of the dataset, there are 6 clusters with at least 50 points, which are 14, 15, 1, 7, 4, and 21 (shown in \Cref{fig:before}).
These clusters have $6113$ points in total.
For these clusters, there are $(6!)^2=518400$ permutations of the dimension-point matrix. We also remove the vertices that have no outgoing edges. \Cref{fig:after} shows the same clusters after sorting. We see that each cluster has a dense rectangular region. We also see that there are other clusters in the intersection of other clusters that were not marked in the input. Since these are research email exchanges, these might be interdisciplinary research conducted by researchers from several departments.

\begin{figure}[h]
\centering
\begin{subfigure}{0.45\textwidth}
\centering
\includegraphics[scale=0.6]{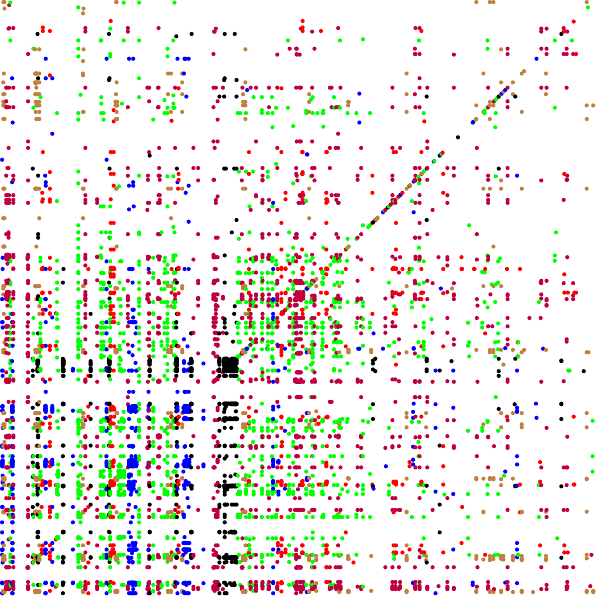}
\caption{The heat map of the selected clusters before sorting. Each color represents a different cluster.}\label{fig:before}
\end{subfigure}
\begin{subfigure}{0.45\textwidth}
\centering
\includegraphics[scale=0.013]{}
\caption{The heat map of the selected clusters after sorting in the order $(14,15,1,7,4,21)$.}\label{fig:after}
\end{subfigure}
\end{figure}

To compute $\rho$, we compute the rectangle containing each cluster and consider all the points that do not lie inside a rectangle as outliers.
In the sorted order of points based on their clusters, we draw a separating line such that the number of points from the current cluster dominates the number of points from other clusters in each dimension. First, we count the number of points per cluster for each value of the first coordinate that appears in the input and find the cluster with the maximum value for each one. Then, we take the median of several consecutive cluster numbers (here we used $11$) to make sure there is (at most) one interval per cluster. Repeating the same process for the other dimension gives us disjoint intervals and the clusters look like squares.
This works because the clusters are almost disjoint, dimension-disjoint, and rectangular.

After reordering the clusters, there are $5002$ points inside the clusters.
So, the value of the density ratio for this subset of the data is:
\[
\rho =\frac{5002}{6113} \approx 81.83\%.
\]
\FloatBarrier
\subsection{The Output of The FPT Algorithm for HMS in MPC}\label{sec:cisco}
We repeat the experiments using Cisco networks dataset~\cite{madani2022dataset}.
One of the graphs from the set labeled extra graphs of the dataset is shown in \Cref{fig:cisco}.
We choose the graph of the network traffic of day 2022/10/13 from the Cisco network dataset.
The data for each graph contains the edge list (start and end vertices of directed edges), and metadata such as port and protocol for each edge. It has $59739$ edges and $6973$ vertices. The $7709$-th edge of the file for day 2022/10/13 happens for edge $(19548,19839)$ that belongs to all the $42$ clusters, which is the maximum among all other edges. So, all the clusters intersect and $f_2=42$. The approximation ratio is therefore $42$.
We consider the first protocol in the list of an edge as its cluster, so $f_1=1$.
Using this modification, each point belongs to one cluster.
\begin{figure}
\centering
\includegraphics[scale=0.03]{}
\caption{The head map of the adjacency matrix of the network traffic graph for 2022/10/13. Different colors represent different protocols.}
\label{fig:cisco}
\end{figure}

\Cref{fig:ciscos} shows a maximal solution to the heat map sorting problem.
\begin{figure}
\centering
\includegraphics[scale=0.07]{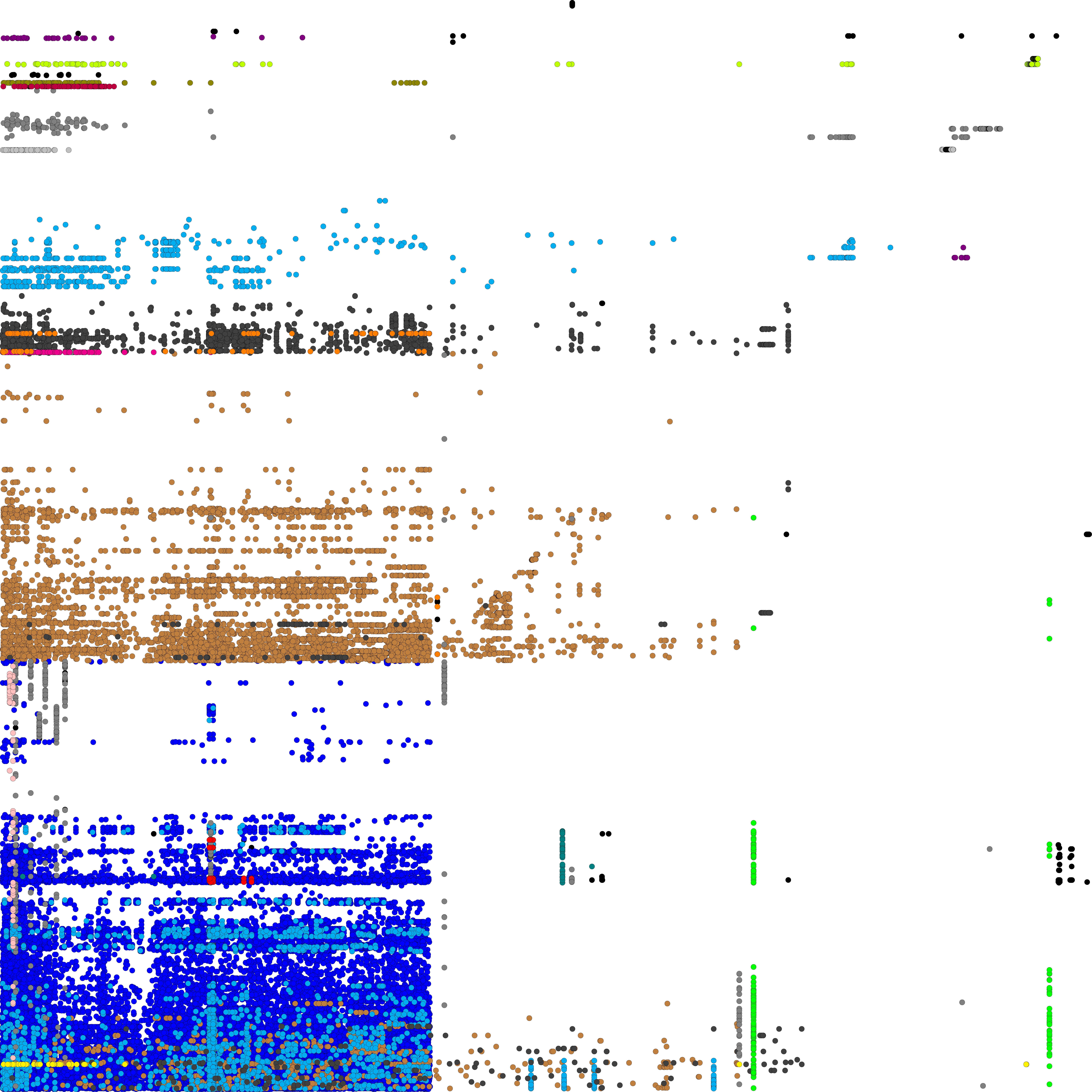}
\caption{The head map of \Cref{fig:cisco} sorted using protocols as labels.}
\label{fig:ciscos}
\end{figure}
From \Cref{fig:ciscos}, it is clear that the clusters are intersecting.
To get disjoint and dimension-disjoint rectangular clusters, we have to merge some of the intersecting clusters. Using the set of clusters in the input, we can use a frequent itemset algorithm to check whether the merged clusters actually conform to the input data.

There are $57$ subsets of clusters of size at least $2$ that appear in this dataset. \Cref{fig:tree} shows the values for $6$ of such clusters. The cluster $\{2,1\}$ cannot be added to this ordering because the capacity two for the neighbors of $1$ is filled by $3$ and $4$. Alternatively, we could have added $2$ in place of $4$ to keep cluster $\{2,1\}$ and lose cluster $\{4,1\}$ instead.
\begin{figure}
\centering
\includegraphics[scale=0.6]{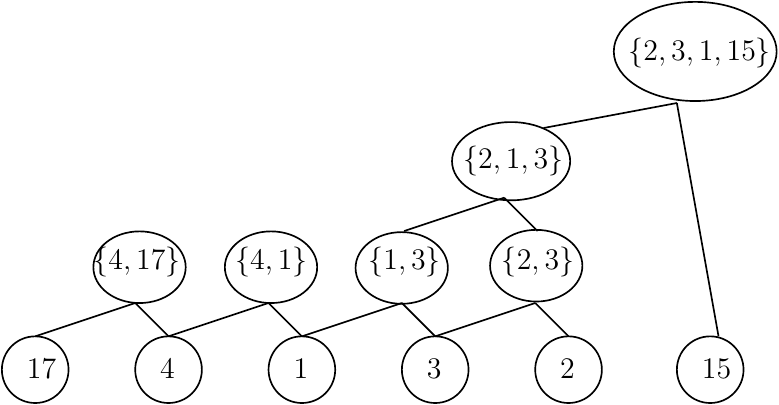}
\caption{The tree of the optimal ordering of clusters $1$,$2$,$3$,$4$,$15$,$17$,$\{4,17\}$,$\{4,1\}$,$\{1,3\}$,$\{2,3\}$,$\{2,1,3\}$,$\{2,3,1,15\}$.}
\label{fig:tree}
\end{figure}

After reordering the points according to \Cref{fig:tree} which is $17,4,1,3,2,15$, we arrive at \Cref{fig:box} for clusters $1$,$2$,$3$,$4$,$15$,$17$,$\{4,17\}$,$\{4,1\}$,$\{1,3\}$,$\{2,3\}$,$\{2,1,3\}$,$\{2,3,1,15\}$. While these clusters are not disjoint, we were able to represent them using a set of boxes without including or excluding the points of those clusters.
\begin{figure}
\centering
\includegraphics[scale=0.16]{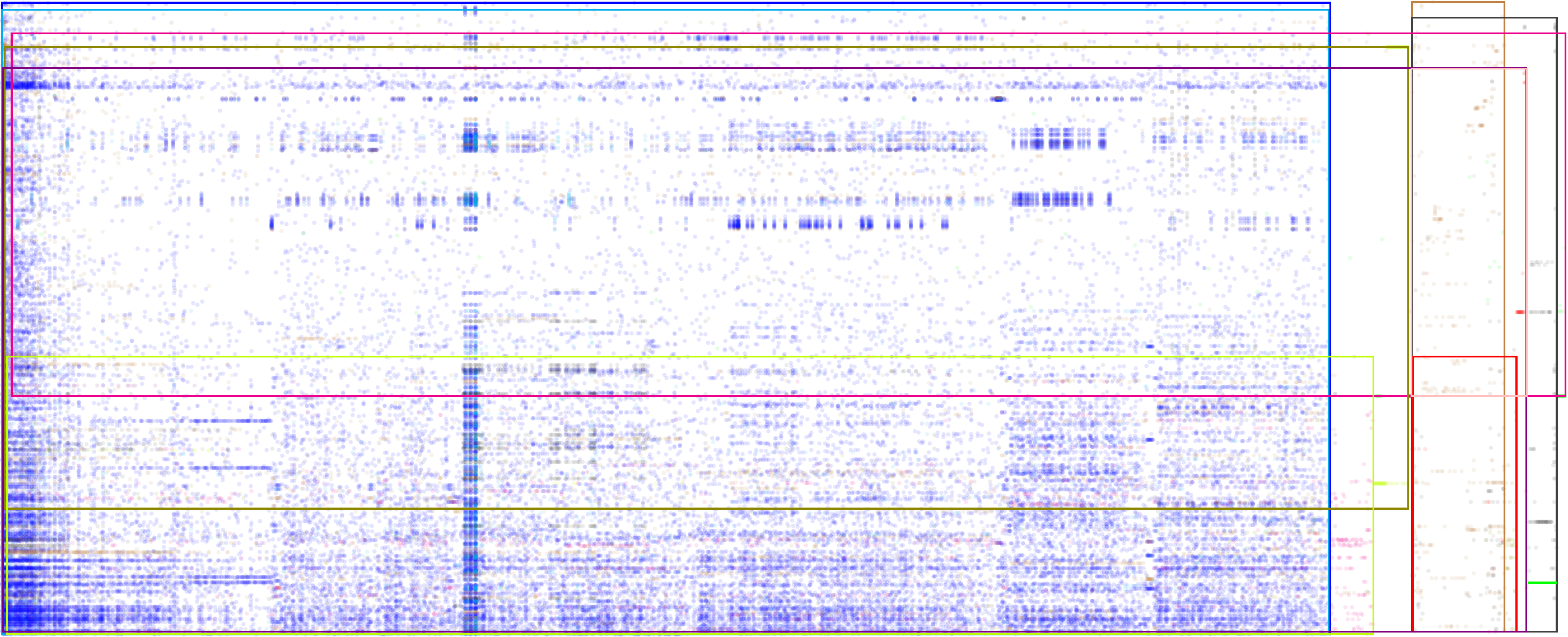}
\caption{The set of clusters from \Cref{fig:tree} and their bounding boxes after rearranging the values of the axes.}
\label{fig:box}
\end{figure}
\subsection{HMS with Outliers for Rectangular Clusters in MPC}\label{sec:disjoint}
Consider the subgraph of the tree from \Cref{sec:cisco} for clusters $1,2,3,4,15,17$ and their supersets. We find the subset of clusters that is most likely to contain each $x$-coordinate value based on the number of points with that $x$-coordinate in each cluster. The number of points becomes $57913$. Note that we have assigned each point to exactly one protocol (original clusters) arbitrarily (this removes protocol number $15$). The intervals for $x$-values (source IP) are shown in \Cref{fig:interval}. Clusters extracted by this method are different from the ones in the input, which means only the traffic of a subgraph (subnet) of the network can be predicted using this model.
\begin{figure}
\centering
\includegraphics[scale=0.6]{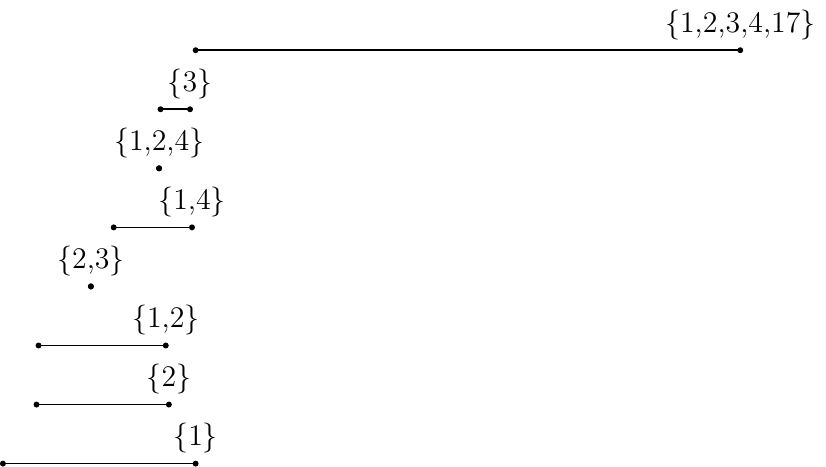}
\caption{The extracted intervals of $x$-coordinates of clusters ($y$-coordinates) that are subsets of $1,2,3,4,17$.}
\label{fig:interval}
\end{figure}

We take the median of the cluster numbers of $11$ consecutive numbers around each value for $x$ and $y$ separately with repetitions (do not remove repeated values that belong to different points) and if it is at least $0.9$ of the maximum number of elements per cluster at that coordinate we keep it, otherwise, we declare the point as an outlier. Considering outliers using points with repetitions guarantees that thin clusters (take a small range in one of the dimensions) are preserved. Note that the interval size for computing the medians ($11$) depends on the minimum cluster size ($5$), and the minimum ratio of the size of the intersection of two clusters to their maximum size ($0.9$) depends on the sizes of the clusters. Also, the ranges of outliers are defined for each cluster separately, so, if a point falls inside the range of another cluster, it is considered an outlier.

Consider the subset of points in clusters $1,2,3,4,17$ and other clusters intersecting with them. There are $57913$ points in this subset, $33906$ of them are outliers based on the disjoint ranges specified for clusters (see \Cref{fig:interval} for an example of such intervals). So, the density ratio using disjoint clusters and the same ordering as \Cref{fig:ciscos} is:
\[
\rho=\frac{24007}{57913}\approx 41.45\%.
\]
Based on the large gap between $\rho=100\%$ of intersecting clusters from \Cref{sec:cisco} and $\rho=41.45\%$ of disjoint clusters in this section, it is clear that our method significantly outperforms partitioning-based clustering algorithms on some datasets but not others (\Cref{sec:email,sec:email2}).
\subsection{Dimensionality Reduction via LSH}\label{sec:dim}
For heat map sorting, we need an algorithm that is not sensitive to the ordering of the dimensions and points. So, we use LSH for Hamming distance with the set of the columns of the matrix as the input.
Testing all possible arrangements of the dimensions (or clusters) takes too long. Using the dimensionality reduction of \Cref{sec:dim}, the dimension decreases, so, it can be easily checked.

Using the Cisco network dataset as the input, we have $n=59739$, so, the number of subspaces in each experiment must be about $10$ to guarantee a good performance when computing all the permutations $(10!\approx 3.6\times 10^6)$. By setting the approximation ratio $c=5$ for the distances between points, the number of subspaces must be at least $n^{1/c}=59739^{1/5}\approx 10$ and the number of dimensions of each subset must be at least $\ln n\approx 11$.

Using $10$ subspaces each with $11$ dimensions, the number of clusters with at most $11$ dimensions is $92$ and they have $59721$ points of the input. So, at most $92\%$ of the clusters and $\rho=\frac{59721}{59739}\approx 99.97\%$ of the points can be preserved in this method. This value is for the best case. When using random sampling, to cover $d$ dimensions, we need to sample $d\log d$ points based on the bound from the coupon collector problem, so, the maximum number of dimensions is $6$ since $6\ln 6\approx 11$.

Empirically, using a single run, we found the value to be $31$ clusters and $51578$ points with clusters of dimension at most $2$. So, $31\%$ of the clusters and $\rho=\frac{51578}{59739}\approx 86.34\%$. For more experiments, see \Cref{sec:compare}.

\subsection{Comparison with Existing Algorithms}\label{sec:compare}
After fixing the ordering, we can compare our algorithm with existing clustering algorithms combined with dimensionality reduction methods.
Explainable clustering algorithms all rely on decision trees, which take $d$ rounds in MPC on a $d$-dimensional point set as we proved in earlier sections. This is $42$ rounds for the Email dataset and $6$ rounds for the Cisco dataset. Also, there is an exponential dependency on $d$ which does not allow us to run those algorithms on datasets of tens of thousands of points with more than two dimensions.
DBSCAN (mathematically defined as the connected components of the disk graph) and $k$-medoids ($k$-means using the input points as centers) are used as subroutines in several subspace clustering algorithms.

We repeat the experiments of the previous sections for DBSCAN using radius $10$ and using $\ell_{\infty}$ distance after applying LSH and $k$-medoids using the number of clusters as $k$. Since the datasets are too big for these algorithms and larger clusters are better for comparing algorithms, we use subsets of the datasets. Still, for $k$-medoids we can only verify the labeling.

\Cref{table:experiment} shows the summary of the results of our experiments on big data sets. DBSCAN is computed using spatial indexing and blocks before reordering the input or using disjoint clusters. Heat map sorting is the algorithm of \Cref{sec:cisco} and LSH is the algorithm of \Cref{sec:dim}. The subset of the second dataset that we use is from a region where several clusters intersect and it covers most of the input points.

For the subset of the Email dataset used in \Cref{sec:email} with $n=5940$, we must randomly sample $9$ dimensions (source vertices) since $\ln n <9$, and repeat this $n^{1/5}<6$ times (for $6$ subspaces) which gives a $5$-approximation of the distances. The number of clusters in this subset is $7$.
Since this is the verification problem, we set both radius and $k$. The radius of $k$-medoids for $k=6$ is set to $100$.
For a subset of Cisco dataset, $n=57913$, $\ln n<10$, $n^{1/5}<9$.

\begin{table*}[h]
\centering
\begin{tabular}{|p{1.8cm}|c|c|c|c|}
\hline
Algorithm & \multicolumn{2}{|c|}{Email Dataset} & \multicolumn{2}{|c|}{Cisco Dataset}\\
& $\rho$ & $k\%$ & $\rho$ & $k\%$\\
\hline
\hline
DBSCAN & $0.97\%$ & $\frac{4}{42}\approx 9.52\%$ & $41.45\%^*$ & $\frac{5}{12}\approx 41.67\%^*$\\
\hline
LSH + $k$-medoids & $\frac{5736}{5940}\approx96.57\%^*$ & $100\%^*$  & $\frac{43328}{57913}\approx74.82\%^*$ & $\frac{4}{5}\approx 80\%^*$ \\ 
\hline
HMS & $81.83\%^*$ & $100\%^*$ & $100\%^*$ & $\frac{11}{12}\approx 91.67\%^*$\\
\hline
\end{tabular}
\caption{Comparison of heat map sorting algorithms on big data sets. Experiments that used a subset of the dataset were marked with $^*$.}\label{table:experiment}
\end{table*}

\Cref{table:experiment} compares the accuracy of the explainable model $(\rho)$ and the ratio of the clusters preserved by the algorithm to the input clusters $(k\%)$.
As expected, the heat map sorting algorithm outperforms existing algorithms, except for the case where the clusters are in the shape of a set of disjoint disks (the Email dataset has clusters in the shape of squares, which is close to circles) and there is a uniform noise. In this special case, the outliers are similar to each other and non-outlier points, and the clusters are convex, $k$-means algorithm easily removes the outliers by considering only convex clusters while DBSCAN fails to do so due to its sensitivity to the input radius parameter. For example, on the Email dataset, DBSCAN has less than 1\% accuracy. An example of a cluster that DBSCAN misses is \Cref{fig:cluster} where some extra rows/columns and diagonal connections (instead of the same row/column) split the cluster.
Decision trees are also sensitive to $k$ and running single-linkage to convert a bound on the distance (similarity) to the number of clusters requires at least a logarithmic number of rounds.
\section{Conclusions and Open Problems}
We discussed MPC algorithms for HMS and its special cases, based on our serial parameterized and approximation algorithms and special cases of parallel algorithms that we proved had better complexity than the general case.

Some of the advantages of heat map sorting are:
\begin{itemize}
\item a representation method for the relation between the dimensions and the points of different clusters simultaneously,
\item compressing the data on both the dimensions and the input points, and
\item a better visibility of a large dataset with non-numerical attributes.
\end{itemize}
Also, parallel HMS reduces graph problems to an embedding (a geometric problem) while preserving properties such as connectivity.

In computer network applications, using explainable clustering on time intervals, IP ranges, and protocols or port ranges can help identify recurring attacks or problems. This can be used as a tool in secondary data analysis or network traffic analysis by using one of the attributes as the label and the rest as dimensions.

We discussed the sensitivity of algorithms to their input parameters (the effect of small perturbations on the performance) of clustering algorithms. More specifically, we showed that the sensitivity of DBSCAN to its radius is more than the sensitivity of HMS to its radius, and the sensitivity of partitioning-based methods such as decision trees and HMS with disjoint rectangles to the number of clusters $(k)$ is more than $k$-means and HMS with dimension-disjoint rectangles.

Our proofs depend on the existence of a solution that preserves all the clusters but we did not prioritize between keeping the maximum number of clusters and minimizing the number of misclassifications (outliers). Prioritizing one might be useful in scenarios where almost all the points or almost all the clusters can be preserved.

\bibliographystyle{unsrt}
\bibliography{refs}

\begin{thebibliography}{10}

\bibitem{marcinkevivcs2020interpretability}
Ri{\v{c}}ards Marcinkevi{\v{c}}s and Julia~E Vogt.
\newblock Interpretability and explainability: A machine learning zoo
  mini-tour.
\newblock {\em arXiv preprint arXiv:2012.01805}, 2020.

\bibitem{dasgupta2020explainable}
Sanjoy Dasgupta, Nave Frost, Michal Moshkovitz, and Cyrus Rashtchian.
\newblock Explainable k-means and k-medians clustering.
\newblock In {\em Proceedings of the 37th International Conference on Machine
  Learning, Vienna, Austria}, pages 12--18, 2020.

\bibitem{laber2023shallow}
Eduardo Laber, Lucas Murtinho, and Felipe Oliveira.
\newblock Shallow decision trees for explainable k-means clustering.
\newblock {\em Pattern Recognition}, 137:109239, 2023.

\bibitem{indyk2018set}
Piotr Indyk, Sepideh Mahabadi, Ronitt Rubinfeld, Ali Vakilian, and Anak
  Yodpinyanee.
\newblock Set cover in sub-linear time.
\newblock In {\em Proceedings of the 29th Annual ACM-SIAM Symposium on Discrete
  Algorithms}, pages 2467--2486. SIAM, 2018.

\bibitem{cygan2015parameterized}
Marek Cygan, Fedor~V Fomin, {\L}ukasz Kowalik, Daniel Lokshtanov, D{\'a}niel
  Marx, Marcin Pilipczuk, Micha{\l} Pilipczuk, and Saket Saurabh.
\newblock {\em Parameterized algorithms}, volume~4.
\newblock Springer, Switzerland, 2015.

\bibitem{lampis2018parameterized}
Michael Lampis, Kazuhisa Makino, Valia Mitsou, and Yushi Uno.
\newblock Parameterized edge hamiltonicity.
\newblock {\em Discrete Applied Mathematics}, 248:68--78, 2018.

\bibitem{gionis1999similarity}
Aristides Gionis, Piotr Indyk, Rajeev Motwani, et~al.
\newblock Similarity search in high dimensions via hashing.
\newblock In {\em VLDB}, volume~99, pages 518--529, 1999.

\bibitem{indyk1998approximate}
Piotr Indyk and Rajeev Motwani.
\newblock Approximate nearest neighbors: towards removing the curse of
  dimensionality.
\newblock In {\em Proceedings of the 30th annual ACM symposium on Theory of
  computing}, pages 604--613, 1998.

\bibitem{datar2004locality}
Mayur Datar, Nicole Immorlica, Piotr Indyk, and Vahab~S Mirrokni.
\newblock Locality-sensitive hashing scheme based on p-stable distributions.
\newblock In {\em Proceedings of the 20th annual symposium on Computational
  geometry}, pages 253--262, 2004.

\bibitem{andoni2014beyond}
Alexandr Andoni, Piotr Indyk, Huy~L Nguy\~{\^e}n, and Ilya Razenshteyn.
\newblock Beyond locality-sensitive hashing.
\newblock In {\em Proceedings of the 25th annual ACM-SIAM symposium on Discrete
  algorithms}, pages 1018--1028. SIAM, 2014.

\bibitem{cochez2015twister}
Michael Cochez and Hao Mou.
\newblock Twister tries: Approximate hierarchical agglomerative clustering for
  average distance in linear time.
\newblock In {\em Proceedings of the 2015 ACM SIGMOD international conference
  on Management of data}, pages 505--517, 2015.

\bibitem{yaroslavtsev2018massively}
Grigory Yaroslavtsev and Adithya Vadapalli.
\newblock Massively parallel algorithms and hardness for single-linkage
  clustering under $\ell_p$-distances.
\newblock In {\em Proceedings of the 35th International Conference on Machine
  Learning}, 2018.

\bibitem{snapnets}
Jure Leskovec and Andrej Krevl.
\newblock {SNAP Datasets}: {Stanford} large network dataset collection.
\newblock \url{http://snap.stanford.edu/data}, June 2014.

\bibitem{leskovec2007graph}
Jure Leskovec, Jon Kleinberg, and Christos Faloutsos.
\newblock Graph evolution: Densification and shrinking diameters.
\newblock {\em ACM transactions on Knowledge Discovery from Data}, 1(1):2--es,
  2007.

\bibitem{yin2017local}
Hao Yin, Austin~R Benson, Jure Leskovec, and David~F Gleich.
\newblock Local higher-order graph clustering.
\newblock In {\em Proceedings of the 23rd ACM SIGKDD international conference
  on knowledge discovery and data mining}, pages 555--564, 2017.

\bibitem{madani2022dataset}
Omid Madani, Sai~Ankith Averineni, and Shashidhar Gandham.
\newblock A dataset of networks of computing hosts.
\newblock In {\em Proceedings of the 2022 ACM on International Workshop on
  Security and Privacy Analytics}, pages 100--104, 2022.

\bibitem{mpc}
Paul Beame, Paraschos Koutris, and Dan Suciu.
\newblock Communication steps for parallel query processing.
\newblock In {\em 32nd ACM SIGMOD-SIGACT-SIGART Symposium on Principles of
  Database Systems, PODS 2013}, 2013.

\bibitem{mpcj}
Paul Beame, Paraschos Koutris, and Dan Suciu.
\newblock Communication steps for parallel query processing.
\newblock {\em Journal of the ACM}, 64(6):40, 2017.

\bibitem{mrc}
Howard Karloff, Siddharth Suri, and Sergei Vassilvitskii.
\newblock A model of computation for mapreduce.
\newblock In {\em Proceedings of the 21st annual ACM-SIAM symposium on Discrete
  Algorithms}, pages 938--948. Society for Industrial and Applied Mathematics,
  2010.

\bibitem{fish2015computational}
Benjamin Fish, Jeremy Kun, Ad{\'a}m~D Lelkes, Lev Reyzin, and Gy{\"o}rgy
  Tur{\'a}n.
\newblock On the computational complexity of mapreduce.
\newblock In {\em Distributed Computing: 29th International Symposium, DISC
  2015, Tokyo, Japan, October 7-9, 2015, Proceedings 29}, pages 1--15.
  Springer, 2015.

\bibitem{frei2019efficient}
Fabian Frei and Koichi Wada.
\newblock Efficient circuit simulation in mapreduce.
\newblock In {\em 30th International Symposium on Algorithms and Computation}.
  Schloss Dagstuhl-Leibniz-Zentrum fuer Informatik, 2019.

\bibitem{andoni2018parallel}
Alexandr Andoni, Zhao Song, Clifford Stein, Zhengyu Wang, and Peilin Zhong.
\newblock Parallel graph connectivity in log diameter rounds.
\newblock In {\em 2018 IEEE 59th Annual Symposium on Foundations of Computer
  Science}, pages 674--685. IEEE, 2018.

\end{thebibliography}
\newpage
\appendix
\section{Prerequisites}
\subsection{Approximation Algorithms}
Approximation algorithms are polynomial-time algorithms that compute a solution that is at most $\alpha$ time the optimal solution for minimization problems and at least $1/\alpha$ time the optimal solution for maximization problems, for a known function $\alpha(n)$ which is called the approximation ratio and $n$ is the size of the input.

\subsection{Massively Parallel Computation (MPC)}\label{sec:mpc}
A theoretical model for map-reduce considers a set of $L$ machines, each with memory $m$, that independently and in parallel process data during a set of $R$ rounds. We use the massively parallel computation (MPC) model~\cite{mpc,mpcj}, where $m=O(n^{\eta}), L=O(n/m), R=O(1)$, for some constant $\eta\in (0,1)$, however, polylogarithmic factors of $n$ are also allowed as multiples of $m$ and $L$ in most cases.

The computational complexities of MPC and a less restricting model that still contains MPC, called the map-reduce class (MRC)~\cite{mrc}, have been studied with relation to TISP class and the BSP model~\cite{fish2015computational}, as well as the NC class, circuit complexity, and PRAM models~\cite{frei2019efficient}. Checking whether two vertices $s$ and $t$ are connected (known as the st-connectivity problem) is hard to compute in $O(1)$ rounds of MPC~\cite{yaroslavtsev2018massively,andoni2018parallel}. Similar results exist for the hardness of st-connectivity in other parallel models.

Parallel algorithms for P-complete problems and logarithmic-time parallel algorithms (in the PRAM model) for connectivity in the NC class are open.

\subsubsection{Fixed-Parameter Algorithms}\label{sec:fixedparameter}
There are no known polynomial-time algorithms for NP-hard problems, however, fixed-parameter algorithms run in polynomial-time algorithms at the cost generality of the solution.
A problem is fixed-parameter tractable (FPT) if for a parameter $k$ of the input, the problem can be solved in time complexity that is an arbitrary function of $k$ and polynomial in $n$, where $n$ is the size of the input.

\end{document}